\newtheorem{theorem}{Theorem}
\newtheorem{proposition}{Proposition}
\newtheorem{lemma}{Lemma}
\theoremstyle{definition}
\newtheorem{definition}{Definition}
\newtheorem{corollary}{Corollary}
\newcommand{\argmax}{\operatornamewithlimits{arg\,max}}
\newcommand{\argmin}{\operatornamewithlimits{arg\,min}}
\title{Cache-Aided Interactive Multiview Video Streaming in Small Cell Wireless Networks}
\author{Eirina~Bourtsoulatze, {\em Member, IEEE}, and Deniz~G\"{u}nd\"{uz}, {\em Senior Member, IEEE} %
\thanks{E.~Bourtsoulatze is with the Department of Electronic and Electrical Engineering at University College London, London WC1E 7JE, United Kingdom (e-mail: e.bourtsoulatze@ucl.ac.uk). 

D.~G\"{u}nd\"{uz} is with the Department of Electrical and Electronic Engineering at Imperial College London, London SW7 2AZ, United Kingdom  (e-mail: d.gunduz@imperial.ac.uk). } %
\thanks{This work has been funded by the European Union's Horizon 2020 research and innovation programme under the Marie Sklodowska-Curie grant agreement No. 750254 and by the European Research Council (ERC) under the European Union's Horizon 2020 research and innovation programme under grant agreement No. 677854. }
}
\begin{document}
\maketitle

\begin{abstract}
The emergence of novel interactive multimedia applications with high rate and low latency requirements has led to a drastic increase in the video data traffic over wireless cellular networks. Endowing the small base stations of a macro-cell with caches that can store some of the content is a promising technology to cope with the increasing pressure on the backhaul connections, and to reduce the delay for demanding video applications. In this work, delivery of an interactive multiview video to a set of wireless users is studied in an heterogeneous cellular network. Differently from existing works that focus on the optimization of the delivery delay and ignore the video characteristics, the caching and scheduling policies are  
jointly optimized, taking into account the quality of the delivered video and the video delivery time constraints. We formulate our joint caching and scheduling problem as the average expected video distortion minimization, and show that this problem is NP-hard. We then provide an equivalent formulation based on submodular set function maximization and propose a greedy solution with $\frac{1}{2}(1-\mbox{e}^{-1})$ approximation guarantee. The evaluation of the proposed joint caching and scheduling policy shows that it significantly outperforms benchmark algorithms based on popularity caching and independent scheduling. Another important contribution of this paper is a new constant approximation ratio for the greedy submodular set function maximization subject to a $d$-dimensional knapsack constraint.
\end{abstract}

\begin{keywords}
Joint caching and scheduling, multiview video, submodular function maximization, $d$-dimensional knapsack
\end{keywords}


\section{Introduction}
\label{sec:intro}

Recent analyses of visual networking applications indicate a steady growth in mobile video traffic, which is expected to reach more than three quarters of the total mobile traffic by 2021 \cite{CiscoVNI}. Large share of this traffic is attributed to video content generated by emerging multimedia applications, such as virtual reality (VR), augmented reality (AR) and interactive multiview video streaming (IMVS). These novel multimedia technologies offer users a completely new experience through the possibility to interact with the application in real time. However, high quality experience and interactivity comes with low latency and high bandwidth requirements that must be met by the mobile data operators. To deal with the ever increasing amount of mobile video data, the use of small cell base stations (SBSs) equipped with caches to store some of the high data rate content has been proposed in \cite{GolrezaeiINFOCOM2012}. SBS caches can be exploited by off-loading content to the caches during off-peak hours, and serving users locally through short-range low-latency communication during peak-hours. In that way, the use of costly backhaul links that connect the SBSs to the core network during the peak-hours can be alleviated and the load on the macro cell base station (MBS) can be reduced \cite{PoularakisTMC2017, PoularakisTCOMM2014}.  

Our work is motivated by the new challenges arising with the emergence of immersive and interactive multimedia technologies. In particular, we study an IMVS application provided to users over a cellular network. IMVS enables users to freely explore the 3D scene of interest from different viewpoints in real time \cite{DeAbreuJVCIR2015}. The challenge of offering such interactivity to users is the need to deliver multiple video streams corresponding to different views, as views selected by users are not known a priori. Thus, compared to single view conventional video streaming, an IMVS application typically requires much higher bandwidth to enable low-latency view switching at high quality. 

As in state-of-the-art wireless edge caching systems proposed for video-on-demand (VoD) applications \cite{GolrezaeiINFOCOM2012, PoularakisTMC2017,PoularakisTCOMM2014}, the placement of multiview video content in the SBS caches can bring the video content closer to wireless users, and reduce the load on the MBS and the backhaul links. As a result, a larger set of views can be delivered to users with a better quality of experience (QoE). However, the key objective in the context of caching for real-time video streaming is different from the one considered for VoD applications. In the latter, the users request a single file according to some popularity distribution and the aim is to place the video content in the caches in a way to minimize the average download delay. This objective is not suitable for real-time video streaming applications,  as it ignores the video delivery time constraints and the quality of the delivered content. 

In an IMVS system, users do not request a single file, but a set of views, which ideally would include all views captured by the cameras. However, when the available bandwidth is limited, only a subset of available views can be delivered to the users. Hence, when optimizing the caching policy one must take into consideration the quality of the delivered content, and perform the scheduling of optimal sets of views jointly with cache placement. Differently from existing caching solutions, we jointly optimize the caching and scheduling policies to ensure the delivery of the optimal sets of views within the time constraints imposed by the real-time video application. Though joint caching and routing has been previously considered in the literature \cite{PoularakisTCOMM2014, WangIEEEACCESS2017, KhreishahJSAC2016, OzfaturaCOMMLET2017}, the time constraints and the quality of the delivered video have not been taken into account in the proposed solutions.  

The optimal selection of the delivered views that maximizes the video quality has been studied in \cite{DeAbreuJVCIR2015,ToniMMSP2013}. In these works the delivered sets of views are optimized assuming the bandwidth resources of the users are fixed and known a priori. In a cellular network deploying multiple SBSs, each user may fall within the coverage range of several SBSs. Thus, there is no prior knowledge of the users' bandwidth capabilities. The users' bandwidth resources may vary depending on the density of the SBS placement and the number of users served by these SBSs, and, therefore, must be allocated jointly with the caching policy. 

\begin{figure}[t]
	\begin{center}
	\vspace{-1cm}
		\includegraphics[width = 0.3 \textwidth]{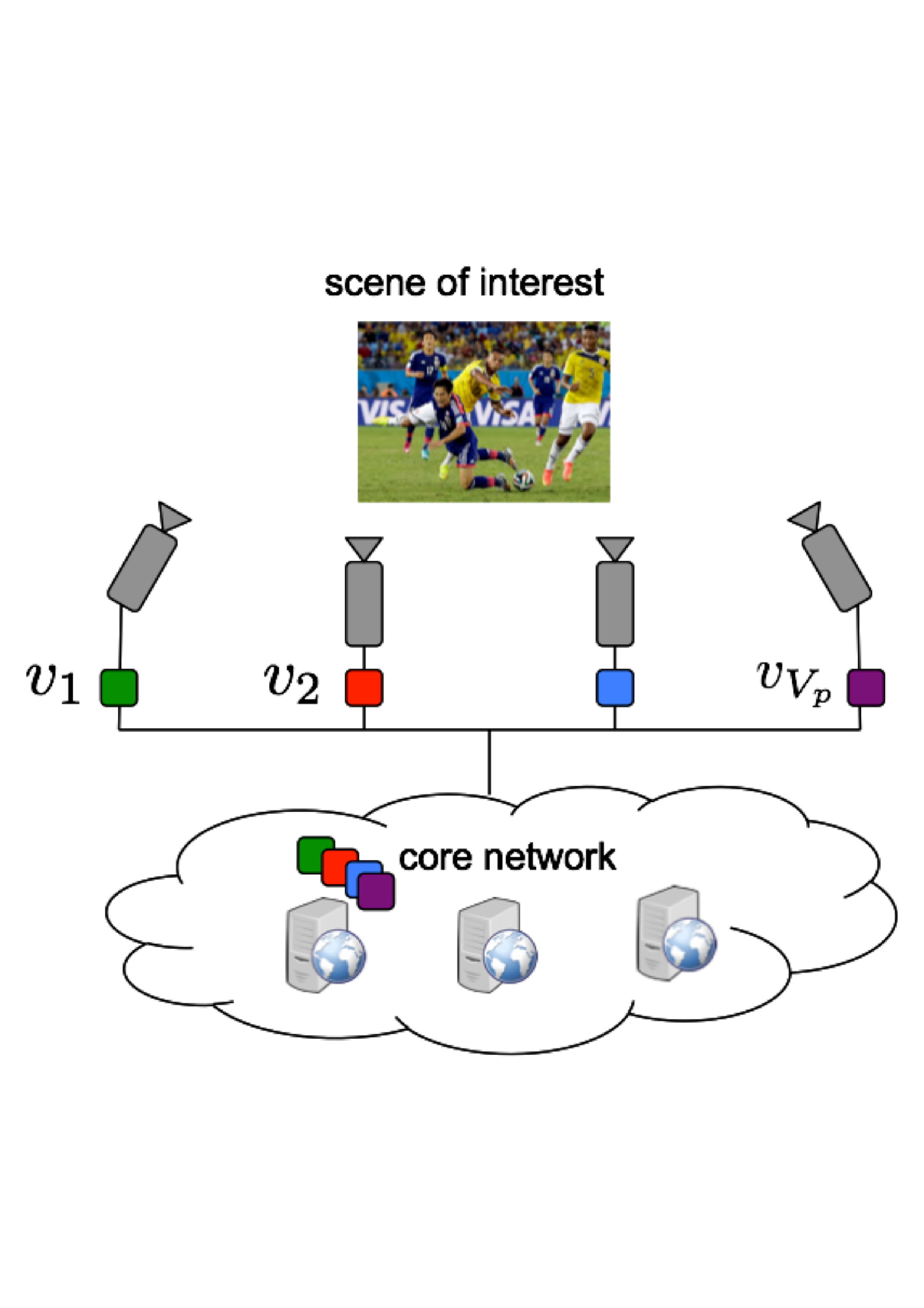}
	\vspace{-1cm}
	\end{center}
		\caption{Illustration of an IMVS with $V_p$ cameras capturing the scene of interest from multiple viewpoints. Captured content is transmitted to the core network.}
	\label{fig:imvs}
\end{figure}

In this paper, we propose a novel framework for jointly optimizing the caching and scheduling policies for the delivery of interactive multiview video in a wireless cellular network consisting of an MBS and multiple SBSs equipped with caches. The goal of our joint policy is to optimally allocate the cache and bandwidth resources of the wireless network in order to minimize the average expected distortion of the users who freely navigate through the available set of views during the streaming session. Differently from previous works, our framework takes into account the time constraints of the streamed video and the quality of the video content delivered to the users. Initially, we formulate our joint caching and scheduling problem for IMVS as the average expected distortion minimization and show that this problem is NP-hard. We then show that this problem can be equivalently expressed as the maximization of the reduction in the average expected distortion, and prove that the equivalent problem involves maximizing a submodular set function subject to a $d$-dimensional knapsack constraint. In order to efficiently solve our optimization problem, we adopt a greedy algorithm and prove that it admits a constant approximation ratio of $\frac{1}{2}(1-\mbox{e}^{-1})$. To the best of our knowledge, this is a novel result in the literature which extends the existing results on greedy submodular function optimization with a single knapsack constraint to the case with a $d$-dimensional knapsack constraint. Finally, we show through numerical evaluation that our proposed algorithm for joint caching and scheduling significantly outperforms benchmark algorithms based on popularity caching and independent rate allocation.


\section{System Model}
\label{sec:system}

We consider an IMVS application as illustrated in Fig.~\ref{fig:imvs}. An array of equally spaced cameras capture a 3D scene of interest from multiple viewpoints.  Let $\mathcal{V}_p \triangleq \{v_1, v_2, \dots, v_{V_p}\}$ denote the set of views acquired by the cameras, called the \textit{anchor views}, where $V_p = |\mathcal{V}_p| > 2$ is the total number of anchor views.  Each camera then encodes its acquired anchor view independently from the other cameras, and transmits it to the core network. This is a common assumption for distributed video acquisition scenarios, where the multiple cameras typically do not communicate with each other; and, thus, cannot jointly compress the captured video streams. In the core network, the compressed video streams are stored at the content provider's servers. 

From the core network, the video content is further delivered to a set of wireless users through a cellular network as shown in Fig.~\ref{fig:cellularnetwork}.  We study a streaming scenario where the IMVS session is initiated some time after the video content is recorded, and not immediately after the content is acquired as in live streaming. For example, this may correspond to the broadcasting of a sports event or a concert, which took place in one part of the world and could not be live-streamed to regions in different time zones. In this case, the content provider acquires the recorded content and makes it available according to a predefined viewing schedule, also providing an interactive multiview experience to users.

\begin{figure}[t]
	\begin{center}
	\vspace{-2.5cm}
			\includegraphics[width = 0.4 \textwidth]{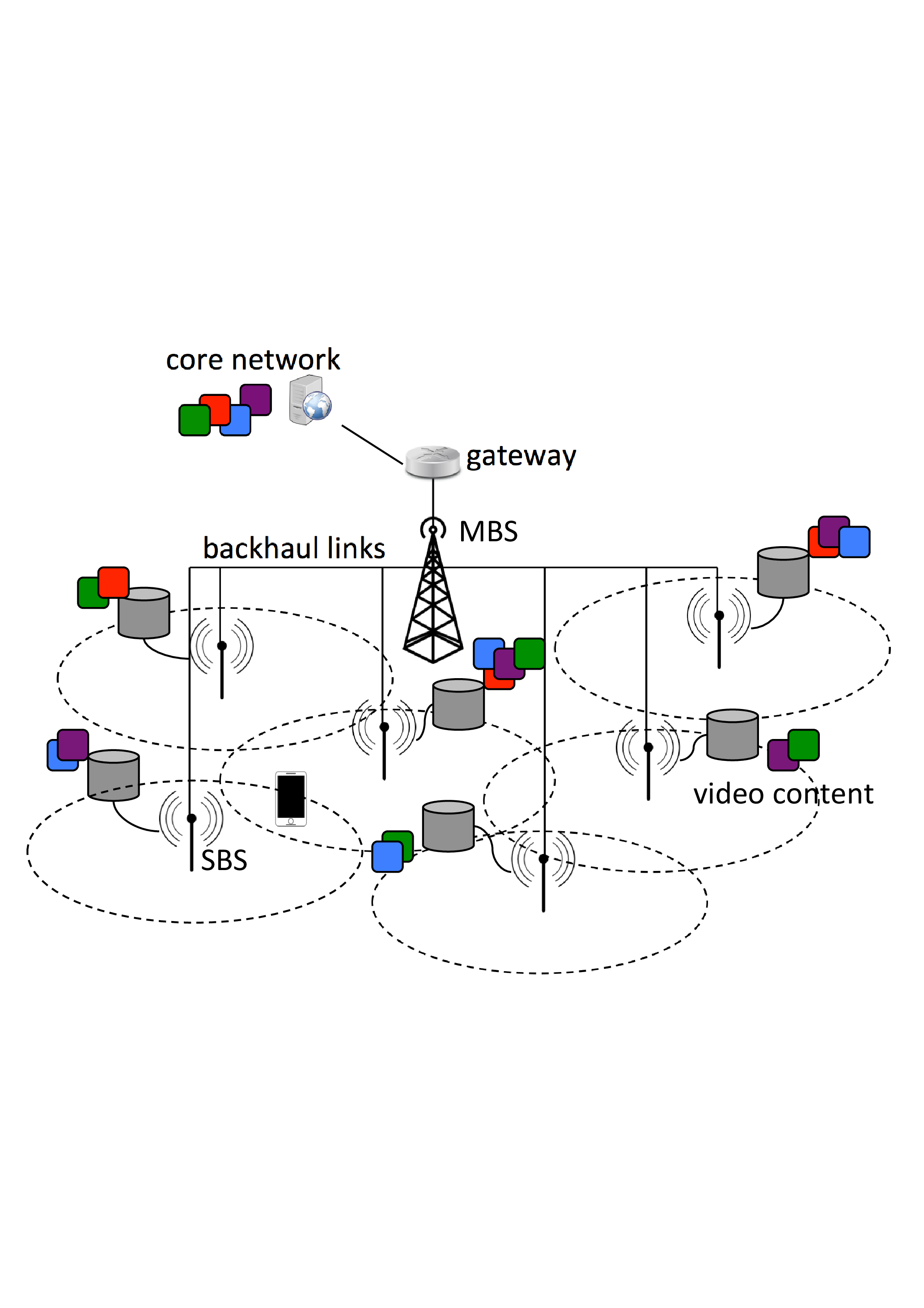}
	\vspace{-2cm}
	\end{center}
		\caption{The multiview video is cached in the small-cell base stations and delivered to wireless users simultaneously during the streaming session.}
	\label{fig:cellularnetwork}
\end{figure}

We focus on the operation of a single macro-cell that serves a set  $\mathcal{U} \triangleq \{1,2,\dots, U\}$ of $U$ wireless users. The macro-cell is covered by a total of $N+1$ base stations (BSs); an MBS indexed by $n=0$, and a set $\mathcal{N} \triangleq \{1,2,\dots, N\}$ of $N$ SBSs distributed across the macro-cell. The MBS covers the entire macro-cell region and can communicate with all the users within the cell. The SBSs have limited coverage and can only communicate with the users located within their proximity. Let $\mathcal{U}_n \subseteq \mathcal{U}$ denote the set of users that are covered by the $n$-th BS. Given a sufficiently dense placement of SBSs within the macro-cell area, every user will typically fall within the communication  range of multiple SBSs. We denote the set of BSs that can communicate with user $u$ as $\mathcal{N}_u \subseteq \mathcal{N}\cup\{0\}$. Note that, we have $\mathcal{U}_0 = \mathcal{U}$, and $0 \in \mathcal{N}_u,~ \forall u\in \mathcal{U}$. We assume that the MBS and the SBSs are assigned disjoint sets of subchannels, while neighbouring SBSs operate in orthogonal frequency bands \cite{GolrezaeiINFOCOM2012, PoularakisTCOMM2014}.  This permits us to ignore any interference among the BSs. Furthermore, we consider that SBS $n\in \mathcal{N}$ is equipped with a cache of size $C_n$ bytes and can store all or part of the multiview video, while the MBS is assumed to have unlimited cache capacity. Additionally, the total transmission capacity of each BS including the MBS is assumed to be limited, and equal to $R_n$ Mbps (Mbits per second), $\forall n \in \mathcal{N}\cup\{0\}$. This total transmission capacity $R_n$ must be allocated among the users within the coverage area of BS $n$, $\mathcal{U}_n$. To facilitate the caching and delivery of the multiview video content, and to use the network resources more efficiently, the video stream from each camera is partitioned into $T$ segments of $b^{t}$ bytes each,  $ t \in \mathcal{T} \triangleq \{1,2,\dots, T\}$.\footnote{We assume that the views are symmetrically coded. Hence, the size of a segment does not depend on the view index.} We denote the $t$-th segment of view $v$ as $B^{v,t}$. The index $t$ also stands for the time slot in which the segment $B^{v,t}$ can be scheduled for delivery.  

Since the video content is available at the server before its dissemination to the users, the caches available at SBSs can be exploited in order to reduce the load on the congested backhaul links during the streaming session. Part of the pre-recorded video content can be offloaded  to the caches of the SBSs during low-traffic hours and before the IMVS session starts. The cached content can then be directly delivered from the local caches to the users, avoiding thus the use of costly backhaul links and reducing the delivery delay which is critical in real-time video streaming.

In addition to the anchor views directly acquired by the cameras, additional virtual views can be synthesized from the texture and depth map information of the anchor views via the depth image based rendering (DIBR) technique \cite{SchmeingBOOK2011}. Specifically, we assume that at the decoder side,  $L$ additional virtual views can be synthesized between two adjacent views $v_i, \; v_{i+1} \in \mathcal{V}_p$. We denote the set of virtual views as $\mathcal{V}_s \triangleq \{v_{1+\delta}, v_{1+2\delta}, \allowbreak \dots, v_{1+L\delta}, \allowbreak \dots, v_{V_p - 1+\delta}, \allowbreak \dots,  v_{V_p - 1+L\delta}\}$, where $\delta = \frac{1}{L+1}$ is the distance between two adjacent views in the set $\mathcal{V}  \triangleq \mathcal{V}_p \cup \mathcal{V}_s$ of all available viewpoints. In order to virtually synthesize a view $v \in \mathcal{V}$, a left and a right reference view from the set of anchor views $\mathcal{V}_p$ are required. Let $v_l < v < v_r $  with $v_l$, $v_r$ $\in \mathcal{V}_p$ denoting the left and the right reference anchor views, respectively. The distortion at which the virtual view $v$ is synthesized depends generally on the quality of the reference anchor views $v_l, v_r$, and the spatial correlation between the virtual view and the anchor ones. For an array of aligned and equally spaced cameras, the spatial correlation is proportional to the relative distance between the synthesized view $v$ and the reference anchor views $v_l$ and $v_r$ \cite{DeAbreuJVCIR2015}. To measure the distortion of the synthesized view $v$, we adopt the distortion model proposed in \cite{RenARXIV2012}: 
\begin{equation}
d_v(v_l,v_r) =  \gamma \mbox{e}^{\alpha_v(v_r-v_l)}\Big(\mbox{e}^{\beta_v \mbox{min}\{ v-v_l, v_r-v\}}-1\Big),
\label{eq:dist_model}
\end{equation}
\noindent
where $d_v(v_l,v_r)$ is the distortion at which view $v$ can be reconstructed from reference anchor views $v_l$ and $v_r$. The parameters $\gamma$, $\alpha_v$, $\beta_v$ define the rate at which the distortion of the virtual view increases with the distance from the reference anchor views. Note that the optimization of the distortion function for virtual view synthesis is beyond the scope of this paper, and the model in \eqref{eq:dist_model} has been chosen due to its simplicity and accuracy. Our framework for optimizing the cache-aided IMVS in small cell wireless networks is general and can incorporate other distortion models as well.
  
At any given time during the IMVS session, a user can select any of the actual camera viewpoints in set $\mathcal{V}_p$, or choose to synthesize a virtual view from set $\mathcal{V}_s$ in real time. Hence, the user can freely navigate through the available set of views $\mathcal{V}$ and explore the 3D scene of interest from different viewpoints. To enable such interactivity at the best possible quality, the full set of anchor views $\mathcal{V}_p$ must be delivered to the user at any given time, so that the user can reconstruct any virtual view in set $\mathcal{V}_s$ from the best left and right reference anchor views. However, this is not always possible in a bandwidth-limited system due to the strict delay constraints imposed by the IMVS application. Depending on the available resources, typically only a subset of the anchor views can be delivered to the wireless users. 

The subset of anchor views delivered to the user determines the distortion at which viewpoints selected by the user and not included in the delivered set of anchor views can be reconstructed. Note that the users select the desired viewpoints in real time. This implies that the views requested by a user at any given time during an IMVS session are not known when the contents are placed into the caches of SBSs. Thus, the subset of views to be stored at the SBSs has to be optimized with respect to the expected video quality based on a probabilistic model of the popularity of video segments of each view. For each segment $B^{v,t}$, we define a popularity $p^{v,t} \in [0,1]$ which represents the probability that the $t$-th segment of view $v$ will be selected for viewing. The segment popularity depends on the video content corresponding to this segment and without loss of generality (w.l.g.) can be considered the same for all users. The content popularity can be learned by the content provider by analyzing the multimedia content \cite{TzelepisIVC2016}, or the history of viewing requests \cite{BlascoICC14}. To guarantee the reconstruction of any view within set $\mathcal{V}$ at a minimum quality, we consider that views $v_1$ and $v_{V_p}$, the leftmost and rightmost anchor views, are always delivered to all the users by the MBS.\footnote{A potential alternative for the delivery of high-rate multiview video content is \textit{hybrid networking}, which combines terrestrial digital video broadcasting (DVB) with broadband cellular networks \cite{EkmekciogluTCSVT2017}. Therefore, one can consider that the two extreme anchor views are broadcasted through the terrestrial network to all the users.} The remaining transmission capacity $R_0$ of the MBS, as well as the cache capacity $C_n$ and transmission capacity $R_n$ of the SBSs are used to deliver additional anchor views to further improve the quality of the synthesized views.

In this work, we aim to find the optimal joint caching and scheduling policy for the multiview video segments $B^{v,t}$ that minimizes the average expected distortion at the wireless users which participate in the IMVS session. In the next section, we provide the formal problem formulation, and prove that the joint caching and scheduling problem for IMVS is NP-hard.


\section{Joint Caching and Scheduling problem}
\label{sec:problem}

\subsection{Problem formulation}

Let us define the binary variable $x_n^{v,t} \in \{0,1\}$,  where $x_n^{v,t} = 1$, if the $t$-th segment of view $v \in \mathcal{V}_p$ is placed in the cache of the $n$-th SBS during the caching phase, and $x_n^{v,t} = 0$ otherwise. Similarly, let $y_{n,u}^{v,t} \in \{0,1\}$ be a binary decision variable which indicates whether the segment $B^{v,t}$ is scheduled for delivery from BS $n$ to user $u$; $y_{n,u}^{v,t} =1$  if the segment $B^{v,t}$ is scheduled, and $y_{n,u}^{v,t} =0$ otherwise. The vector $(\bm x, \bm y)$ of decision variables, where 
\begin{equation}
\begin{split}
\bm x &\triangleq (x_{n}^{v,t} \in \{0,1\},\; \forall n\in \mathcal{N}, \; v \in \mathcal{V}_p, \; t\in \mathcal{T}), \; \mbox{and} \\
\bm y &\triangleq ( y_{n,u}^{v,t}\in \{0,1\}, \;  \forall n \in \mathcal{N}\cup\{0\}, \; u \in \mathcal{U}, \; v\in \mathcal{V}_p, \; t\in \mathcal{T} ),
\end{split}
\end{equation}
\noindent
defines a joint caching and scheduling policy. From the assumption that the leftmost and rightmost anchor views $v_1$ and $v_{V_p}$, respectively,  are always delivered to the users by the MBS, we have 
\begin{equation} 
\begin{split}
&x_n^{v,t} = 0,\; \forall n \in \mathcal{N}, \;  v \in \{v_1, v_{V_p}\}, \;  t \in \mathcal{T}, \\
&y_{n,u}^{v,t} = 0, \; \forall n \in \mathcal{N},\;   u \in \mathcal{U}, \;  v \in \{v_1, v_{V_p}\}, \;  t \in \mathcal{T}, \\
&y_{0,u}^{v,t} = 1, \; \forall u \in \mathcal{U}, \;  v \in \{v_1, v_{V_p}\}, \;  t \in \mathcal{T}.  \\
\end{split}
\label{eq:assumption}
\end{equation}

The distortion $D_{u}^{v,t}(\bm x, \bm y)$ at user $u$ for reconstructing segment $B^{v,t}$ for the given caching and scheduling policy $(\bm x, \bm y)$  can be expressed as follows:

\begin{equation}
D_{u}^{v,t}(\bm x, \bm y) =
\begin{cases}
\tilde{D}_{u}^{v,t}(\bm x, \bm y)(1- \mathbbm{1}_{\{\sum_{n\in\mathcal{N}_u }y_{n,u}^{v,t} > 0\}}), \quad \forall v \in \mathcal{V}_p \\
\tilde{D}_{u}^{v,t}(\bm x, \bm y) ,\quad \forall v \in \mathcal{V}_s
\end{cases}, 
\label{eq:dist_u_v_t}
\end{equation}
\noindent
 where  $\tilde{D}_{u}^{v,t}(\bm x, \bm y)$ is the minimum distortion at which segment  $B^{v,t}$ can be reconstructed at user $u$ given the joint caching and scheduling policy $(\bm x, \bm y)$. The indicator function $ \mathbbm{1}_{\{c\}}$ is ``1'' if the condition $c$ is true, and ``0'' otherwise. In Eq.~\eqref{eq:dist_u_v_t}, we distinguish the following two cases. When view $v$ belongs to the set of anchor views captured by the cameras, the distortion for reconstructing the segment $B^{v,t}$  at user $u$  is 0, if the segment is delivered to user $u$ by at least one of the BSs $n\in \mathcal{N}_u$ that cover user $u$. Otherwise, the distortion is equal to the minimum distortion $\tilde{D}_{u}^{v,t}(\bm x, \bm y)$ at which segment $B^{v,t}$ can be reconstructed at user $u$ given the joint caching and scheduling policy $(\bm x, \bm y)$. When view $v$ is a virtual view, the segment $B^{v,t}$ is not delivered to user $u$, and is synthesized using the corresponding segments of the closest left and right anchor views according to the joint caching and scheduling policy $(\bm x, \bm y)$.  Finally, the minimum distortion $\tilde{D}_{u}^{v,t}(\bm x, \bm y)$ at which segment $B^{v,t}$ can be reconstructed at user $u$ when it is not delivered,  is given by
 
 \begin{equation}
 \begin{split}
 \tilde{D}_{u}^{v,t}  (\bm x, \bm y) &= \sum_{v_l < v} \sum_{v_r > v} d_v(v_l,v_r) \cdot  \\
 &\mathbbm{1}_{\{\sum_{n \in \mathcal{N}_u} y _{n,u}^{v_l,t} > 0\}}   \prod_{v_l < v_l^\prime < v} \Big(1 -   \mathbbm{1}_{\{\sum_{n \in \mathcal{N}_u} y_{n,u}^{ v_l^\prime,t}>0\}}\Big) \cdot \\
		& \mathbbm{1}_{ \{\sum_{n \in \mathcal{N}_u} y _{n,u}^{v_r,t} > 0\}}   \prod_{v< v_r^\prime < v_r} \Big(1 - \mathbbm{1}_{\{\sum_{n \in \mathcal{N}_u} y_{n,u}^{ v_r^\prime,t}>0\}}\Big)
		\end{split}
		\label{eq:min_dist}
 \end{equation}
 
\noindent
The second term of the product in Eq.~\eqref{eq:min_dist} is equal to ``1'' if for the view index $v_l \in \mathcal{V}_p$ the segment $B^{v_l,t}$ is delivered to user $u$ by at least one BS, and for all other segments $B^{v_l^\prime,t}$ delivered to user $u$, the view $v^\prime_l$ is farther from $v$ than $v_l$. Similarly, the third term of the product in Eq.~\eqref{eq:min_dist} is equal to ``1'' if for the view index $v_r \in \mathcal{V}_p$ the segment $B^{v_r,t}$ is delivered to user $u$ by at least one BS, and for all other segments $B^{v_r^\prime,t}$ delivered to user $u$, the view $v^\prime_r$ is farther from $v$ than $v_r$. Note that the product of the second and the third terms of the product in Eq.~\eqref{eq:min_dist} is non-zero only for one unique pair of left and right views $v_l$ and $v_r$. Furthermore, due to the assumption that all segments for views $v_1, \, v_{V_p} $ are always delivered to all the users, one such pair always exists. Finally, from Eqs. ~\eqref{eq:dist_u_v_t} and \eqref{eq:min_dist} we can observe that the distortion $D_{u}^{v,t}(\bm x, \bm y)$ is a function of only the scheduling vector $\bm y$. However, the distortion depends implicitly on the caching policy $\bm x$ since the latter determines the feasible schedules at SBSs. Thus, any attempt at minimizing the distortion must jointly consider the caching and scheduling decisions. 

Our goal is to devise a joint caching and scheduling policy that minimizes the average expected distortion of the wireless users. The optimization problem can be formally written as 

\begin{equation}
(\bm x^*, \bm y^*) = \argmin_{(\bm x, \bm y)} \frac{1}{U} \frac{1}{T} \sum_{u \in \mathcal{U}}\sum_{t\in\mathcal{T}}\sum_{v\in \mathcal{V}} D_{u}^{v,t}(\bm x, \bm y)p^{v,t} 
\label{eq:opt_prob}
\end{equation}
\begin{equation} 
\mbox{s.t.} \quad  \sum_{t\in\mathcal{T}} \sum_{v\in \mathcal{V}_p \backslash \{v_1, v_{V_p}\}} x_n^{v,t}b^t \leq C_n, \; \forall n \in \mathcal{N},
 \label{eq:caching}
 \end{equation}
 \begin{equation} 
\sum_{u\in\mathcal{U}} \sum_{v\in \mathcal{V}_p \backslash \{v_1, v_{V_p}\}} y_{n,u}^{v,t} r \leq R_n, \; \forall n\in\mathcal{N}\cup\{0\}, \; t \in \mathcal{T} ,
\label{eq:rate}
\end{equation} 
\begin{equation} 
y_{n,u}^{v,t} \leq x_n^{v,t}, \quad \forall u \in \mathcal{U}, \;  n \in \mathcal{N}, \;  v \in \mathcal{V}_p, \;  t \in \mathcal{T}.
\label{eq:scheduling}
\end{equation}
\begin{equation} 
x_n^{v,t}, y_{n,u}^{v,t} \in \{0,1\} \;\; \mbox{and constraints in \eqref{eq:assumption}}
\label{eq:additional_constraints}
\end{equation}

Constraint \eqref{eq:caching} is the cache capacity constraint and guarantees that the total amount of data stored in an SBS's cache does not exceed its capacity. Constraint \eqref{eq:rate} is the transmission capacity constraint, which states that the total rate delivered by an SBS in time slot $t$ must not exceed its capacity. The constant $r$ denotes the video rate, and since we consider symmetric coding of the views, it is the same for all captured views. Finally, the inequality in \eqref{eq:scheduling} couples the caching and scheduling decisions, and ensures that only the segments that are stored in SBS caches can be scheduled for transmission. 
 
The optimization problem defined in Eqs.~\eqref{eq:opt_prob}-\eqref{eq:additional_constraints} is an integer program which is difficult to solve directly due to the non-convex nature of the objective function and the integer constraints. In the next subsection, we show that this problem belongs to the class of NP-hard problems.


\subsection{Complexity}
\label{sec:complexity}

We now show that the optimization problem in \eqref{eq:opt_prob}-\eqref{eq:scheduling} is NP-hard. To prove that, it is sufficient to show that the corresponding decision version of the problem, which we call the \textit{joint caching and scheduling (JCS)} decision problem, is NP-hard. The proof relies on the method of restriction, which consists of placing additional restrictions on the instances of a given problem ${\bm P} \in$ NP so that the restricted problem is equivalent to some known NP-complete problem ${\bm P}^\prime$ \cite{GareyBOOK1979}.  In the following, we first define the decision version of our optimization problem, and provide the definition of the \textit{set K-cover (SKC)} decision problem \cite{SlijepcevicICC2001}, which will be used in the proof. 

\begin{definition}
\textit{JCS decision problem}: Given the set of BSs $\mathcal{N} \cup \{0\}$, the set of users $\mathcal{U}$, the sets of views $\mathcal{V}_p $ and $ \mathcal{V}_s$, the segment size values $\mathcal{B} = \allowbreak \{b^1, b^2, \dots, b^T\}$, the number of segments $T$, the segment popularity $\mathcal{P} = \{p^{v,t}\}$,  the SBSs' cache capacity values $\mathcal{C}  =\allowbreak \{C_1,  \allowbreak C_2, \dots, C_N\}$, the BSs' total transmission capacity values $\mathcal{R} = \{R_0, R_1, \dots, R_N\}$, the distortion function $D_{u}^{v,t}(\bm x, \bm y)$ and a positive real number $Z$, determine if there exists a  feasible joint caching and scheduling policy $({\bm x}, {\bm y})$ that satisfies the constraints in \eqref{eq:caching}-\eqref{eq:additional_constraints} and 
\begin{equation}
\frac{1}{U}\frac{1}{T}\sum_{u \in \mathcal{U}} \sum_{t \in \mathcal{T}}\sum_{v \in \mathcal{V}} D_{u}^{v,t}(\bm x, \bm y) p^{v,t}  \leq Z.
\end{equation}
\end{definition}
\noindent We denote the JCS decision problem instance as JCS($\mathcal{N} \cup\{0\}, \mathcal{U},  \mathcal{V}_p, \mathcal{V}_s, \mathcal{B}, T, \mathcal{P}, \mathcal{C}, \mathcal{R}, D_{u}^{v,t}(\bm x, \bm y),\allowbreak Z$). 

\begin{definition}
\textit{SKC decision problem \cite{SlijepcevicICC2001}}: Given a collection of subsets $\mathcal{S}$ of a set $\mathcal{A}$ and a positive integer $K \geq 2$, does $\mathcal{S}$ contain $K$ disjoint covers  for $\mathcal{A}$, {\em i.e.}, covers $\mathcal{S}_1, \mathcal{S}_2, \dots, \mathcal{S}_K$, where $\mathcal{S}_k \subset S$, such that every element of $\mathcal{A}$ belongs to at least one member of each of $\mathcal{S}_k$? 
\end{definition}
\noindent We denote the above SKC decision problem by SKC$(\mathcal{A}, \mathcal{S}, K)$. The SKC decision problem is known to be NP-complete \cite{SlijepcevicICC2001}. 
\begin{proposition}
The JCS decision problem is NP-hard.
\end{proposition}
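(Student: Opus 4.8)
The plan is to prove NP-hardness by the restriction method announced above: I will exhibit a polynomial-time reduction from the (known NP-complete) SKC decision problem by identifying a subclass of JCS instances that is, essentially verbatim, an encoding of SKC. Given an arbitrary instance $\mathrm{SKC}(\mathcal{A}, \mathcal{S}, K)$, I would build a JCS instance in which the SBSs are in one-to-one correspondence with the subsets in $\mathcal{S}$ and the users are in one-to-one correspondence with the elements of $\mathcal{A}$, declaring that SBS $j$ covers user $i$ (i.e.\ $i\in\mathcal{U}_j$) precisely when element $a_i$ lies in subset $s_j$. I would take $T=K$ segments/time slots, a single non-extreme anchor view $v^{*}$ (so $\mathcal{V}_p=\{v_1,v^{*},v_{V_p}\}$, which still satisfies $V_p>2$), set all virtual-view popularities to $0$ and $p^{v^{*},t}=1$ for every $t$. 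Choosing unit segment sizes $b^t=1$ together with unit SBS caches $C_n=1$ forces each SBS to cache $v^{*}$ in at most one of the $T$ time slots; choosing $R_n=U r$ for the SBSs makes the per-slot transmission constraint~\eqref{eq:rate} non-binding, while $R_0=0$ prevents the MBS from ever delivering $v^{*}$ (note the always-delivered extreme views $v_1,v_{V_p}$ do not enter constraints~\eqref{eq:caching}--\eqref{eq:rate}). The distortion-model parameters are picked so that $d_{v^{*}}(v_1,v_{V_p})>0$, which is automatic whenever $\gamma,\beta_{v^{*}}>0$ since $v_1<v^{*}<v_{V_p}$. Finally I set the target value $Z=0$. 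This construction is clearly polynomial in the size of the SKC instance.

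With this restriction the objective in~\eqref{eq:opt_prob} collapses: the extreme views contribute $0$ because they are always delivered by the MBS, the virtual views contribute $0$ because their popularity is $0$, and $v^{*}$ contributes, for each (user, slot) pair, either $0$ (if $B^{v^{*},t}$ is delivered to that user) or the positive constant $d_{v^{*}}(v_1,v_{V_p})$ (if not, since then the only available reference pair is $(v_1,v_{V_p})$). Hence a feasible policy attains objective $\le Z=0$ iff $v^{*}$ is delivered to every user in every time slot. The next step is to translate this delivery condition into the SKC question: since $R_0=0$ forbids MBS delivery of $v^{*}$ and constraint~\eqref{eq:scheduling} permits scheduling $B^{v^{*},t}$ from SBS $j$ only if $x_j^{v^{*},t}=1$, view $v^{*}$ reaches user $i$ in slot $t$ exactly when some SBS $j$ with $a_i\in s_j$ has cached $B^{v^{*},t}$. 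Grouping the SBSs by the (unique, if any) slot in which they cache $v^{*}$ yields pairwise-disjoint sub-collections $\mathcal{S}_1,\dots,\mathcal{S}_K\subseteq\mathcal{S}$, and the all-users--all-slots delivery condition is precisely the statement that each $\mathcal{S}_t$ is a cover of $\mathcal{A}$. Conversely, any family of $K$ disjoint covers prescribes a caching policy (cache $v^{*}$ in slot $t$ at every SBS in $\mathcal{S}_t$) together with a compatible schedule that achieves objective $0$ while respecting all the constraints. This shows the constructed JCS instance is a YES-instance iff the SKC instance is, giving a polynomial-time reduction from SKC to JCS and hence NP-hardness.

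The routine part is verifying that the constructed instance honours every structural assumption of the model (leftmost/rightmost views delivered by the MBS, symmetric coding, $V_p>2$, feasibility of the trivial policy, the coupling~\eqref{eq:scheduling}, and the constant $r$) and that constraints~\eqref{eq:caching}--\eqref{eq:additional_constraints} hold in both directions of the equivalence; this needs care but presents no real difficulty. The main point to get right is the encoding itself --- in particular, playing the unit cache budget against the $T=K$ time slots to force ``one slot per SBS,'' which is exactly what converts the shared-across-slots cache constraint~\eqref{eq:caching} into disjointness of the covers, and simultaneously zeroing out $R_0$ and all virtual- and extreme-view contributions so that the objective becomes a pure indicator of ``every requested view delivered everywhere.'' Once this design is fixed, the equivalence with SKC is immediate.
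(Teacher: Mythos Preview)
Your reduction is correct. Both you and the paper reduce from SKC via restriction, identify the users with $\mathcal{A}$ and the SBSs with the sub-collection $\mathcal{S}$ (with $\mathcal{U}_j=s_j$), set $R_0=0$, and use a unit cache budget per SBS to enforce disjointness of the $K$ covers. The genuine difference is in \emph{where} you put the parameter $K$: the paper takes $T=1$ and introduces $K=|\mathcal{V}_p|-2$ non-extreme anchor views, so each SBS may cache at most one \emph{view} and the covers are indexed by views; you instead keep a single non-extreme anchor view $v^{*}$ and take $T=K$ time slots, so each SBS may cache $v^{*}$ in at most one \emph{slot} and the covers are indexed by slots. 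Your encoding buys a cleaner target: by zeroing the virtual-view popularities you drive the threshold to $Z=0$ and the objective becomes a pure delivery indicator, whereas the paper keeps uniform popularities and must set $Z$ to the (non-zero) minimum achievable virtual-view distortion and argue that this minimum is attained iff every anchor view reaches every user. The paper's route, on the other hand, does not rely on the freedom to choose degenerate popularity vectors and stays closer to the multiview spirit of the model. Both are valid polynomial-time restrictions and yield the same conclusion.
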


\begin{proof}
Consider the SKC($\mathcal{A},\mathcal{S}, K$) decision problem and an instance of the JCS($\mathcal{N} \cup\{0\},  \mathcal{U},  \mathcal{V}_p,\mathcal{V}_s, \mathcal{B},T,  \allowbreak\mathcal{P}, \mathcal{C}, \mathcal{R}, D_{u}^{v,t}(\bm x, \bm y), Z$) decision problem with $|\mathcal{V}_p| -2 = K$, $\mathcal{U} = \mathcal{A}$, $\{\mathcal{U}_1,\dots, \mathcal{U}_N\} = \mathcal{S}$,  $T = 1$, $\mathcal{B} = \{b\}$, $p^{v,t} = \frac{1}{|\mathcal{V}|}$, $\mathcal{C} = \{b,b, \dots, b\}$, $R_0 = 0$, $R_n = |\mathcal{U}_n|$ and $Z =\allowbreak  \frac{|\mathcal{V}_s|}{|\mathcal{V}|}\sum_{v \in \mathcal{V}_s} D_{min}^{v} $, where $D_{min}^v$ is the minimum distortion at which a virtual view $v \in \mathcal{V}_s$ can be synthesized, and it is achieved when view $v$ is reconstructed from the two closest left and right reference anchor views. This instance corresponds to the scenario in which the whole stream consists of a single segment $(T=1)$ of size $b$.  Each SBS can cache only one single view ($C_n = b$), and the users receive data only from the SBSs ($R_0 = 0$). The $n$-th SBS can deliver data to all the users in the set $\mathcal{U}_n$ simultaneously ($R_n = |\mathcal{U}_n|$). Furthermore, every user requests one view from the set $\mathcal{V}$ of available views uniformly at random ($p^{v,t} = \frac{1}{|\mathcal{V}|}$). The average expected distortion for this instance of the JCS decision problem is lower bounded by 

\begin{equation}
\begin{split}
\frac{1}{U}\frac{1}{T}\sum_{u\in \mathcal{U}} \sum_{t \in \mathcal{T}}\sum_{v \in \mathcal{V}} D_{u}^{v,t}(\bm x, \bm y) p^{v,t}  \geq    \frac{|\mathcal{V}_s|}{|\mathcal{V}|}\sum_{v \in \mathcal{V}_s} D_{min}^{v}, \; \forall~ (\bm x, \bm y),
\end{split}
\end{equation}

\noindent which follows immediately if we observe that $D_{u}^{v,t}(\bm x, \bm y)  \geq 0$, $\forall v \in \mathcal{V}_p$, and $D_{u}^{v,t}(\bm x, \bm y) \geq D_{min}^v$, $\forall v \in \mathcal{V}_s$. The minimum value of the average expected distortion can only be attained  if every captured view $v \in \mathcal{V}_p \backslash \{v_1, v_{V_p}\}$ can be delivered to every user $u \in \mathcal{U}$ by at least one of the SBSs in set  $\mathcal{N}_u$ that cover user $u$. Therefore, deciding whether there exists a joint caching and scheduling policy ($\bm x, \bm y$), such that the average expected distortion is equal to $Z$, reduces to determining whether there exists a caching policy $\bm x$ such that every view in set $\mathcal{V}_p \backslash \{v_1, v_{V_p}\}$ is cached in at least one SBS in set $\mathcal{N}_u$ for every user $u$, since, due to the assumption that $R_n = |\mathcal{U}_n|$, the view can always be delivered. This, in turn, is equivalent to finding $|\mathcal{V}_p| -2 = K$ disjoint subsets of SBSs, such that every user is covered by at least one SBS in each subset. It now becomes apparent that the considered instance of the JCS decision problem is equivalent to the SKC decision problem, which is known to be NP-complete. It, therefore, follows that the JCS decision problem is NP-hard.
\end{proof}


\section{Expected distortion reduction maximization}
\label{sec:solution}

In order to deal with the computational complexity of the optimization problem in \eqref{eq:opt_prob}-\eqref{eq:additional_constraints}, we reformulate it as an equivalent problem which aims at maximizing the average expected distortion reduction. We express the equivalent optimization problem as a maximization of a set function defined over subsets of an appropriately selected ground set. We then show that the objective set function is a monotone non-decreasing submodular function. This permits us to devise efficient solutions based on the greedy approach.

\subsection{Equivalent problem formulation}

Let us  define the ground set $\mathcal{E}$ as  
\begin{equation} 
\begin{split}
\mathcal{E} \triangleq \{e_{n, \mathcal{A}_n}^{v,t}: \;  n \in \mathcal{N}\cup  \{0\},  \mathcal{A}_n \subseteq \mathcal{U}_n,   v \in \mathcal{V}_p\backslash \{v_1, v_{V_p}\}, \;   t \in \mathcal{T} \}
\end{split}
\label{eq:ground_set}
\end{equation}

\noindent The element $e_{n, \mathcal{A}_n}^{v,t}$ of the ground set $\mathcal{E}$ denotes the placement of the segment $B^{v,t}$ in the cache of BS $n\in \mathcal{N}\cup \{0\}$ and its scheduling for delivery to a subset $\mathcal{A}_n \subseteq \mathcal{U}_n \subseteq \mathcal{U}$ of the users covered by BS $n $. Any joint caching and scheduling policy $(\bm x, \bm y)$ can be represented by a subset $\mathcal{S}$ of the ground set $\mathcal{E}$. For example, placing the element $e_{n, \mathcal{A}_n}^{v,t} \in \mathcal{E}$ in $\mathcal{S}$ can be regarded as setting the decision variables $x_n^{v,t}$ and $y_{n,u}^{v,t}, \; \forall u \in \mathcal{A}_n$, to ``1''.  Recall that, by our initial assumption, all segments of views $v_1$ and $v_{V_p}$ are delivered by the MBS to all the users. Thus, $e_{0,{\mathcal{U}_0}}^{v_1,t}$ and $e_{0,{\mathcal{U}_0}}^{v_{V_p},t}$, $\forall t \in \mathcal{T}$, will always be included in a set $\mathcal{S}$ that represents a joint caching and scheduling policy $(\bm x, \bm y)$. 

Let us further define the sets $\mathcal{F}_u^{v,t} \triangleq \{e_{n, \mathcal{A}_n}^{v,t}:  n \in \mathcal{N}_u,\; \mathcal{A}_n \subseteq\mathcal{U}_n \; \mbox{s.t. } u\in \mathcal{A}_n\}$, $\forall u \in \mathcal{U}$, $\forall v \in \mathcal{V}_p$, $\forall t \in \mathcal{T}$. Set $\mathcal{F}_u^{v,t}\subseteq \mathcal{E}$ essentially represents all possible ways to deliver segment $B^{v,t}$ to user $u$. Given the ground set $\mathcal{E}$  and sets $\mathcal{F}_u^{v,t}$, we can re-write the distortion function in \eqref{eq:opt_prob} in the form of a set function $ D_u^{v,t}(\mathcal{S}): 2^\mathcal{E} \rightarrow \mathbb{R}$ as follows:
\begin{equation}
\begin{split}
 D_u^{v,t}(\mathcal{S}) = 
  \begin{cases}
\tilde{D}_u^{v,t}(\mathcal{S})\Big( 1- \mathbbm{1}_{\{\mathcal{S} \cap \mathcal{F}_u^{v,t} \neq \emptyset\}}  \Big) ,  \; \quad  v \in \mathcal{V}_p\\
\tilde{D}_u^{v,t}(\mathcal{S}),   \quad  ~ \; v \in \mathcal{V}_s
\end{cases},
\end{split}
\label{eq:dist_set_fuc}
\end{equation}

\noindent 
where 
\begin{equation}
\begin{split}
\tilde{D}_u^{v,t}(\mathcal{S}) =  &
 \sum_{v_l < v} \sum_{v_r > v}d _v(v_l,v_r) \cdot \\
& \mathbbm{1}_{\{\mathcal{S} \cap \mathcal{F}_u^{v_l,t} \neq \emptyset \}} 
 \prod_{v_l < v_{l^\prime} < v} (1-\mathbbm{1}_{\{\mathcal{S} \cap \mathcal{F}_u^{v_{l^\prime},t} \neq \emptyset \}}) \cdot\\
 & \mathbbm{1}_{\{\mathcal{S} \cap \mathcal{F}_u^{v_r,t} \neq \emptyset \}} 
 \prod_{v <  v_{r^\prime} < v_r} (1-\mathbbm{1}_{\{\mathcal{S} \cap \mathcal{F}_u^{v_{r^\prime},t} \neq \emptyset \}}).
\end{split}
\label{eq:dist_tilde}
\end{equation}

\noindent 
The distortion reduction at user $u$ for reconstructing the segment $B^{v,t}$ is defined as 
\begin{equation}
\Delta D^{v,t}_{u}(\mathcal{S}) = D_{max} - D_{u}^{v,t}(\mathcal{S}), \quad \forall \mathcal{S} \subseteq \mathcal{E},
\label{eq:dist_red_set_fun}
\end{equation}
where $D_{max}$ is the maximum distortion when the corresponding segment cannot be reconstructed. The distortion reduction function in Eq.~\eqref{eq:dist_red_set_fun} represents the reduction in the distortion experienced by user $u$ after reconstructing the segment $B^{v,t}$.

The constraints defined in Eqs.~\eqref{eq:caching},~\eqref{eq:rate},~\eqref{eq:scheduling} can also be expressed in terms of set functions defined over the ground set $\mathcal{E}$. Recall that the element $e_{n,\mathcal{A}_n}^{v,t}$ represents the joint placement of the segment $B^{v,t}$ in the cache of BS $n$ and its delivery to a subset of users $\mathcal{A}_n$. This implies that, when the element  $e_{n,\mathcal{A}_n}^{v,t}$  is included in the solution set $\mathcal{S}$, the segment $B^{v,t}$ is placed in the cache of BS $n$ consuming a total space of $b^{t}$ bytes and a rate of $|\mathcal{A}_n|r$ Mbps is allocated by the BS $n$ to transmit it to the users in $\mathcal{A}_n$.  Thus, with each element $e_{n,\mathcal{A}_n}^{v,t} \in \mathcal{E}$, we associate a caching cost of $b^{t}$ bytes and a rate cost of $|\mathcal{A}_n|r$ Mbps.  We define the cache cost and rate cost functions $c_n(\mathcal{S}): 2^\mathcal{E} \rightarrow \mathbb{R}$ and $r_{n}^{t}(\mathcal{S}): 2^\mathcal{E} \rightarrow \mathbb{R}$, respectively, as:

\begin{equation}
\begin{split}
c_{n}(\mathcal{S}) =  \sum_{e_{n^\prime,\mathcal{A}_{n^\prime}}^{v,t}\in\mathcal{S}} & c_n(e_{n^\prime,\mathcal{A}_{n^\prime}}^{v,t}),  \; \forall n\in\mathcal{N}, 
\end{split}
\label{eq:cache_cost_fn}
\end{equation}

\begin{equation}
\begin{split}
r_{n}^{t}(\mathcal{S}) =  \sum_{e_{n^\prime,\mathcal{A}_{n^\prime}}^{v,t^\prime}\in\mathcal{S}} & r_n^t(e_{n^\prime,\mathcal{A}_{n^\prime}}^{v,t^\prime}), \; \forall n\in\mathcal{N}\cup\{0\},
\end{split}
\label{eq:rate_cost_fn}
\end{equation}

\noindent
where  $c_n(e_{n^\prime,\mathcal{A}_{n^\prime}}^{v,t}) = b^{t}$ if $n^\prime = n$, and 0 otherwise, and $r_n(e_{n^\prime,\mathcal{A}_{n^\prime}}^{v,t}) = |\mathcal{A}_n|r$ if $n^\prime = n, t^\prime = t$, and 0 otherwise.

\noindent
Finally, we define the cost function $f_n^{v,t}(\mathcal{S}): 2^\mathcal{E} \rightarrow \mathbb{R}$ as:
\begin{equation}
f_n^{v,t} (\mathcal{S})= \sum_{e_{n^\prime,\mathcal{A}_{n^\prime}}^{v^\prime,t^\prime}\in\mathcal{S}} f_n^{v,t} (e_{n^\prime,\mathcal{A}_{n^\prime}}^{v^\prime,t^\prime}),
\label{eq:counter_fn}
\end{equation}

\noindent
where  $f_n^{v,t} (e_{n^\prime,\mathcal{A}_{n^\prime}}^{v^\prime,t^\prime}) = 1$ if $n^\prime = n, v^\prime = v, t^\prime = t$, and 0 otherwise. Essentially, function $f_n^{v,t}(\mathcal{S})$ counts the number of times segment $B^{v,t}$ is placed in the cache of BS $n$.

We can now reformulate the minimization problem in Eqs.~\eqref{eq:opt_prob}-\eqref{eq:additional_constraints} as a maximization of the average expected distortion reduction as follows: 
\begin{equation}
\mathcal{S}_{OPT} = \underset{\mathcal{S}}{\argmax} \frac{1}{U} \frac{1}{T} \sum_{u \in \mathcal{U}} \sum_{t \in \mathcal{T}} \sum_{v \in \mathcal{V}} \Delta D_{u}^{v,t}(\mathcal{S})
 \label{eq:maximization}
\end{equation}
\begin{equation}
 \mbox{s.t. } c_n(\mathcal{S})  \leq C_n, \; \forall n \in \mathcal{N} 
 \label{eq:cache_constraint_set_fn}
\end{equation}
\begin{equation}
r_n^t(\mathcal{S}) \leq R_n, \; \forall n \in \mathcal{N}\cup\{0\} , \; t\in \mathcal{T}
\label{eq:rate_constraint_set_fn}
\end{equation}
\begin{equation}
f_n^{v,t}(\mathcal{S}) \leq 1,\; \forall n \in \mathcal{N}\cup\{0\}, \; \forall v \in \mathcal{V}_p \backslash \{v_1,v_{V_p}\}, \; \forall t \in \mathcal{T}.
\label{eq:counter_constraint}
\end{equation}

\noindent
Constraints \eqref{eq:cache_constraint_set_fn} and \eqref{eq:rate_constraint_set_fn} are the cache capacity and the transmission capacity constraints. Constraint \eqref{eq:counter_constraint} guarantees that each segment is placed in the cache of a BS only once. This constraint is necessary since neither the cache cost function nor the rate cost function can distinguish between two elements $e^{v,t}_{n,\mathcal{A}_n}$ and $e^{v,t}_{n,\mathcal{A}^\prime_n}$ associated with the same segment $B^{v,t}$. In other words, for two elements $e_{n,\mathcal{A}_n}^{v,t}$, $e_{n,\mathcal{A}_n^\prime}^{v,t} \in\mathcal{S}$, the required cache space calculated by the cache cost function is $2b^{t}$, and the required rate calculated by the rate cost function is $(|\mathcal{A}_n|+|\mathcal{A}^\prime_n|)r$. In practice, however, the two elements $e_{n,\mathcal{A}_n}^{v,t}$ and $e_{n,\mathcal{A}_n^\prime}^{v,t}$ can be replaced with an equivalent element $e_{n,{\mathcal{A}_n \cup \mathcal{A}^\prime_n}}^{v,t}$. Hence, the actual cache space needed is $b^{t}$, and the actual rate needed is $|\mathcal{A}_n\cup\mathcal{A}^\prime_n|r$. Constraint \eqref{eq:counter_constraint} ensures that only a unique element $e^{v,t}_{n,\mathcal{A}_n}$ for segment $B^{v,t}$  and cache $n$ will be included in the solution set.

In the following subsection, we show that the objective function is a monotone non-decreasing submodular set function. We then leverage this property to propose computationally efficient algorithms.

\subsection{Proof of Submodularity}
\label{sec:submodularity}

Submodularity is an important property of set functions that permits to deploy greedy solutions with a good performance-complexity trade-off \cite{KrauseBOOK2013}. In this subsection, we prove that the objective function in the maximization problem in Eq.~\eqref{eq:maximization} is a monotone non-decreasing submodular function. The definition of a monotone non-decreasing set function is given in Appendix \ref{app:submod_fn}.
 
 \begin{proposition} 
The objective function in \eqref{eq:maximization} is a monotone non-decreasing set function over the ground set $\mathcal{E}$. 
\end{proposition}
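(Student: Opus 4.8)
The plan is to prove monotonicity by showing that adding any element $e_{n,\mathcal{A}_n}^{v,t}$ to a set $\mathcal{S}$ can only decrease each distortion term $D_u^{v',t'}(\mathcal{S})$, hence increase each reduction term $\Delta D_u^{v',t'}(\mathcal{S}) = D_{max} - D_u^{v',t'}(\mathcal{S})$, and therefore increase the sum in \eqref{eq:maximization}. Since the objective is a nonnegative linear combination (with weights $\frac{1}{UT}$) of the individual reductions $\Delta D_u^{v,t}(\mathcal{S})$, it suffices to argue that each $D_u^{v,t}(\mathcal{S})$ is monotone non-increasing in $\mathcal{S}$. I would first isolate the only way $\mathcal{S}$ enters the distortion: through the collection of indicators $\mathbbm{1}_{\{\mathcal{S}\cap\mathcal{F}_u^{v',t'}\neq\emptyset\}}$, i.e. through which segments are ``covered'' (delivered) at each user. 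Adding an element to $\mathcal{S}$ can only flip some of these indicators from $0$ to $1$ and never from $1$ to $0$, because $\mathcal{S}\subseteq\mathcal{S}'$ implies $\mathcal{S}\cap\mathcal{F}_u^{v',t'}\neq\emptyset \Rightarrow \mathcal{S}'\cap\mathcal{F}_u^{v',t'}\neq\emptyset$.

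Next I would handle the two cases of \eqref{eq:dist_set_fuc} separately but with a common lemma about $\tilde{D}_u^{v,t}$ from \eqref{eq:dist_tilde}. For an anchor view $v\in\mathcal{V}_p$, if $\mathcal{S}\cap\mathcal{F}_u^{v,t}\neq\emptyset$ then $D_u^{v,t}(\mathcal{S})=0$, the minimum possible value, and since this stays true under enlargement of $\mathcal{S}$, monotonicity of this term is immediate; if the segment is not yet covered, then $D_u^{v,t}(\mathcal{S})=\tilde{D}_u^{v,t}(\mathcal{S})$ and I fall back on the $\tilde D$ argument. For a virtual view $v\in\mathcal{V}_s$, $D_u^{v,t}(\mathcal{S})=\tilde{D}_u^{v,t}(\mathcal{S})$ directly. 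So the crux is showing $\tilde{D}_u^{v,t}(\mathcal{S})$ is non-increasing when more segments become covered. The key observation is that $\tilde{D}_u^{v,t}(\mathcal{S})$ equals exactly $d_v(v_l^\star,v_r^\star)$, where $v_l^\star$ and $v_r^\star$ are the \emph{closest} delivered anchor views to the left and right of $v$ in time slot $t$ (as the paper notes, the product of indicators in \eqref{eq:dist_tilde} is nonzero for a unique pair, and one always exists because $v_1,v_{V_p}$ are always delivered). When an additional anchor segment $B^{v',t}$ becomes covered, either $v'$ lies outside the interval $(v_l^\star, v_r^\star)$ and the closest pair is unchanged, so $\tilde D_u^{v,t}$ is unchanged; or $v'$ lies strictly inside $(v_l^\star,v_r^\star)$, in which case the closest pair tightens to either $(v',v_r^\star)$ or $(v_l^\star,v')$. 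It then remains to check that the distortion model \eqref{eq:dist_model} is monotone in the sense $d_v(v_l',v_r)\le d_v(v_l,v_r)$ whenever $v_l\le v_l'<v$, and symmetrically on the right: writing $d_v(v_l,v_r)=\gamma\,\mathrm{e}^{\alpha_v(v_r-v_l)}(\mathrm{e}^{\beta_v\min\{v-v_l,v_r-v\}}-1)$ with $\gamma,\alpha_v,\beta_v>0$, both the $\mathrm{e}^{\alpha_v(v_r-v_l)}$ factor and the $\min\{v-v_l,v_r-v\}$ argument are non-increasing as $v_l$ increases toward $v$, so $d_v$ decreases — hence narrowing the reference interval never increases the synthesis distortion.

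Assembling these pieces: each $D_u^{v,t}(\mathcal{S})$ is non-increasing in $\mathcal{S}$, so each $\Delta D_u^{v,t}(\mathcal{S})$ is non-decreasing, and the objective in \eqref{eq:maximization}, being a finite nonnegative combination of such terms, is monotone non-decreasing over $2^{\mathcal{E}}$, which is the claim. I would also note that nothing needs to be said about the constraints here — monotonicity is purely a property of the objective set function — and that the feasibility set anyway always contains the ``always-delivered'' elements $e_{0,\mathcal{U}_0}^{v_1,t},e_{0,\mathcal{U}_0}^{v_{V_p},t}$, guaranteeing $\tilde D_u^{v,t}$ is always well-defined. The main obstacle I anticipate is the bookkeeping in the $\tilde D$ case: one must argue carefully that adding an element flips indicators only in the ``right'' direction and that the unique surviving $(v_l,v_r)$ pair either stays put or strictly tightens, and then invoke the elementary monotonicity of $d_v$ in its reference arguments; everything else is routine once the problem is reduced to this per-term statement.
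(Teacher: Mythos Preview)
Your proposal is correct and follows the same approach as the paper: reduce to showing each per-term distortion $D_u^{v,t}(\mathcal{S})$ is non-increasing in $\mathcal{S}$, then conclude by taking a nonnegative linear combination. The paper's proof is much terser --- it simply asserts as an observation that delivering additional segments can only reduce the reconstruction distortion --- whereas you spell out the mechanism (indicators only flip $0\to 1$, the unique $(v_l^\star,v_r^\star)$ pair tightens or is unchanged, and $d_v$ from \eqref{eq:dist_model} is monotone in its reference arguments), which rigorously fills in what the paper leaves implicit.
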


\begin{proof}
 Let us consider sets $\mathcal{S}_1 \subseteq \mathcal{S}_2 \subseteq \mathcal{E}$. Monotonicity follows immediately from the observation that the distortion of a reconstructed segment $B^{v,t}$ at user $u$ can only reduce with the delivery of additional segments $B^{v^\prime,t}$; that is, the distortion reduction can only increase with the delivery of additional data segments. It thus holds that $\Delta D_u^{v,t}(\mathcal{S}_1) \leq \Delta D_u^{v,t}(\mathcal{S}_2)$, {\em i.e.,} the distortion reduction at user $u$ for reconstructing the segment $B^{v,t}$ is a monotone non-decreasing function. Hence, the objective function in \eqref{eq:maximization} is also monotone non-decreasing as a linear combination of monotone non-decreasing functions with non-negative weights. 
 \end{proof}
We will now prove that the objective function in \eqref{eq:maximization} is submodular. The following lemma will be useful in the proof.

 \begin{lemma}
Let $\mathcal{V}_1, \mathcal{V}_2$ satisfy $\mathcal{V}_p \supseteq \mathcal{V}_2 \supseteq \mathcal{V}_1 \supseteq \{v_1, v_{V_p}\} $. Consider a view $v\in\mathcal{V}_p\cup \mathcal{V}_s$. Then, for any $\tilde{v} \in \mathcal{V}_p$, we have
\begin{equation} 
\Delta D^v(\mathcal{V}_1 \cup \tilde{v}) - \Delta D^v(\mathcal{V}_1) \geq \Delta D^v(\mathcal{V}_2 \cup \tilde{v}) - \Delta D^v(\mathcal{V}_2),
\label{eq:submodularity}
\end{equation} 
where the distortion reduction function $\Delta D^v(\hat{\mathcal{V}}): 2^{\mathcal{V}_p }\rightarrow \mathbb{R}$ is defined as 
$$\Delta D^v(\hat{\mathcal{V}}) \triangleq D_{max} - d_v(v_{l}, v_{r}), $$
with $v_{l} \leq v$ and $v_r \geq v$, respectively, being the closest to $v$ left and right anchor views in $\hat{\mathcal{V}}$.
\label{lem:1}
\end{lemma}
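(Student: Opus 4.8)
The plan is to reduce the inequality to a statement purely about how the nearest left/right anchor views to $v$ change when a single view $\tilde v$ is inserted into a growing set, and then check a short case analysis. First I would observe that since $v_1, v_{V_p} \in \mathcal{V}_1 \subseteq \mathcal{V}_2$, for every set $\hat{\mathcal{V}}$ appearing in the statement the nearest left anchor $v_l(\hat{\mathcal{V}})$ and nearest right anchor $v_r(\hat{\mathcal{V}})$ are well defined, so all four terms in \eqref{eq:submodularity} make sense. Writing $\ell_i = v_l(\mathcal{V}_i)$, $r_i = v_r(\mathcal{V}_i)$ for $i=1,2$, monotonicity of the ``nearest anchor'' operation under set inclusion gives $\ell_1 \le \ell_2 \le v \le r_2 \le r_1$; that is, the larger set $\mathcal{V}_2$ has reference views at least as close to $v$. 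Since $D_{\max}$ cancels in each difference, \eqref{eq:submodularity} is equivalent to
\begin{equation}
d_v(\ell_1, r_1) - d_v\bigl(v_l(\mathcal{V}_1\cup\tilde v), v_r(\mathcal{V}_1\cup\tilde v)\bigr) \ \ge\ d_v(\ell_2, r_2) - d_v\bigl(v_l(\mathcal{V}_2\cup\tilde v), v_r(\mathcal{V}_2\cup\tilde v)\bigr).
\end{equation}

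Next I would dispose of the trivial cases. If $\tilde v < \ell_1$ or $\tilde v > r_1$, or if $\tilde v \in \mathcal{V}_1$, then adding $\tilde v$ changes neither nearest anchor for $\mathcal{V}_1$, so the left-hand side is $0$; and because $\tilde v$ is then also outside $(\ell_2, r_2)$ or already present, the right-hand side is likewise $0$, and the inequality holds with equality. Similarly, if $\tilde v \in (\ell_1, r_1)$ but $\tilde v \notin (\ell_2, r_2)$ (which can only happen if $\tilde v$ already lies in $\mathcal{V}_2$, again forcing the right side to $0$), we are done. The remaining substantive case is $\ell_1 \le \ell_2 < \tilde v < r_2 \le r_1$ with $\tilde v \notin \mathcal{V}_2$; here, since $v \notin (\ell_1,\ell_2)\cup(r_2,r_1)$ by $\ell_2\le v\le r_2$, WLOG $\tilde v$ lies on the same side of $v$ as, say, the left — i.e. $\ell_2 < \tilde v \le v$ — so that after insertion the new left anchor for both $\mathcal{V}_1\cup\tilde v$ and $\mathcal{V}_2\cup\tilde v$ becomes $\tilde v$, while the right anchors are unchanged ($r_1$ and $r_2$ respectively). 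The inequality then collapses to
\begin{equation}
d_v(\ell_1, r_1) - d_v(\tilde v, r_1) \ \ge\ d_v(\ell_2, r_2) - d_v(\tilde v, r_2).
\end{equation}

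The core of the argument is this last inequality, and it is where I expect the real work to be: it is a ``diminishing returns in the left argument'' property of the distortion model \eqref{eq:dist_model}, and it has to hold under the constraints $\ell_1 \le \ell_2 < \tilde v \le v \le r_2 \le r_1$. I would prove it by fixing the right reference and treating $g(x,R) \triangleq d_v(x,R) = \gamma e^{\alpha_v (R-x)}\bigl(e^{\beta_v \min\{v-x, R-v\}}-1\bigr)$, and showing that the increment $g(\ell, R) - g(\tilde v, R)$ obtained by moving the left reference from $\ell$ in to $\tilde v$ is nonincreasing both as $\ell$ increases (from $\ell_1$ to $\ell_2$) and as $R$ decreases (from $r_1$ to $r_2$) — equivalently, that $g$ has nonnegative mixed behaviour / is supermodular in $(-x, R)$ on the relevant region. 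On the region $v - x \le R - v$ (the min is $v-x$), $g(x,R) = \gamma e^{\alpha_v(R-x)}(e^{\beta_v(v-x)}-1)$ factors as a function of $R$ times a function of $x$ up to the shared $e^{-\alpha_v x}$, and the required monotonicity of differences follows from convexity/monotonicity of exponentials (with the sign conventions $\alpha_v, \beta_v \ge 0$ that make $d_v$ a sensible increasing distortion); on the complementary region $v-x > R-v$ one has $g(x,R)=\gamma e^{\alpha_v(R-x)}(e^{\beta_v(R-v)}-1)$, handled the same way. The only mild subtlety is that $\ell$ and $R$ may cross between the two regimes of the $\min$, so I would check continuity of $g$ and of its one-sided differences across the diagonal $v - x = R - v$ and argue piecewise. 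Once this scalar inequality is established, combining it with the case analysis above yields \eqref{eq:submodularity}, and hence the lemma.
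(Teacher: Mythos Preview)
Your plan is correct and tracks the paper's proof closely in its overall architecture: reduce by symmetry to $\tilde v\le v$, dispose of the positions of $\tilde v$ for which one or both marginal gains vanish, and then confront the genuinely nontrivial case $\ell_1\le \ell_2<\tilde v\le v\le r_2\le r_1$. Where you and the paper differ is in how that last case is handled. The paper enumerates ten subcases according to the relative ordering of the five distances $v-\tilde v,\ v-\ell_2,\ v-\ell_1,\ r_2-v,\ r_1-v$ and, in each, manipulates the explicit exponential expression from \eqref{eq:dist_model} to bound either the ratio $\frac{\Delta D^v(\mathcal V_2)-\Delta D^v(\mathcal V_1)}{\Delta D^v(\mathcal V_2\cup\tilde v)-\Delta D^v(\mathcal V_1\cup\tilde v)}$ or its companion by $1$. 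Your route instead isolates the scalar inequality $d_v(\ell_1,r_1)-d_v(\tilde v,r_1)\ge d_v(\ell_2,r_2)-d_v(\tilde v,r_2)$ and argues it via monotonicity in the left argument together with a ``nondecreasing-in-$R$'' property of the increment $g(\ell,R)-g(\tilde v,R)$. This is a cleaner organizing principle; the piecewise check across the two branches of the $\min$ (and the mixed branch where $g(\ell,\cdot)$ and $g(\tilde v,\cdot)$ sit in different regimes) is exactly what the paper's ten subcases are doing, just indexed differently. Either way the explicit form of \eqref{eq:dist_model} and the sign assumptions $\alpha_v,\beta_v\ge 0$ are doing the real work.

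One small slip to fix: in your intermediate case ``$\tilde v\in(\ell_1,r_1)$ but $\tilde v\notin(\ell_2,r_2)$'' you assert this can only happen if $\tilde v\in\mathcal V_2$. That is not true in general (e.g.\ $\tilde v$ could lie strictly between $\ell_1$ and $\ell_2$ without belonging to $\mathcal V_2$). The correct reason the right-hand side vanishes there is simply that $\tilde v\le \ell_2$ (or $\tilde v\ge r_2$), so inserting $\tilde v$ does not improve the nearest anchor of $\mathcal V_2$; this is precisely the paper's case~\textit{(ii)}. With that correction your argument goes through.
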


\begin{proof}  We prove the lemma for the case where $\tilde{v} \leq v$. Due to symmetry, the same arguments hold for the case $\tilde{v} \geq v$.

For $j=1,2,$ let $v_{l_j}\in \mathcal{V}_j$ denote the left anchor view closest to $v$, such that $ 0 \leq v-v_{l_j} < v-v_{l_j^\prime}$, $\forall v_{l_j^\prime} \in \mathcal{V}_j$ with $l_j^\prime \neq l_j$. Similarly, let $v_{r_j} \in \mathcal{V}_j$ be the right anchor view closest to $v$, such that $ 0 \leq v_{r_j} -v  < v_{r_j^\prime}- v $, $\forall v_{r_j^\prime} \in \mathcal{V}_j$ with $r_j^\prime \neq r_j$. Since $\mathcal{V}_1 \subseteq \mathcal{V}_2$, we have $v_{l_1} \leq v_{l_2}$ and  $v_{r_1} \geq v_{r_2}$.  We can distinguish three cases depending on the relative position of view $\tilde{v}$ with respect to views $v_{l_1}$ and $v_{l_2}$: \textit{(i)} $\tilde{v} \leq v_{l_1}$, \textit{(ii)} $  v_{l_1} < \tilde{v} \leq v_{l_2}$, and  \textit{(iii)} $  v_{l_2} < \tilde{v} $.  We now prove for each of these three cases that the inequality in \eqref{eq:submodularity} holds.

  \renewcommand{\labelenumi}{\textit{\arabic{enumi})}}
 \renewcommand{\labelenumii}{\textit{\alph{enumii})}}
\begin{enumerate}[wide, labelwidth=!, labelindent=0pt]

\item $\tilde{v} \leq v_{l_1}$:  In this case, the addition of view $\tilde{v}$ to either of sets $\mathcal{V}_1$, $\mathcal{V}_2$ does not provide any further distortion reduction since view $\tilde{v}$ is farther from view $v$ than views $v_{l_1} $ and $v_{l_2}$; and thus, views $v_{l_1}$, $v_{l_2}$ remain as the left anchor views closest to $v$. In particular, $\Delta D^v(\mathcal{V}_1\cup\tilde{v}) = \Delta D^v(\mathcal{V}_1) =  D_{max} - d_v(v_{l_1}, v_{r_1})$, and $\Delta D^v(\mathcal{V}_2\cup\tilde{v}) = \Delta D^v(\mathcal{V}_2) =  D_{max} - d_v(v_{l_2}, v_{r_2}) $. Therefore, $\Delta D^v(\mathcal{V}_1\cup\tilde{v}) - \Delta D^v(\mathcal{V}_1)  = \Delta D^v(\mathcal{V}_2\cup\tilde{v}) - \Delta D^v(\mathcal{V}_2) = 0$.  

\item $   v_{l_1} < \tilde{v} \leq v_{l_2}$: As in the previous case, the addition of $\tilde{v}$ to set $\mathcal{V}_2$ does not provide any distortion reduction since view $\tilde{v}$ is farther from view $v$ than view $v_{l_2}$, and $v_{l_2}$ remains the left anchor view closest to $v$ in set $\mathcal{V}_2\cup \tilde{v}$. Thus, we have $\Delta D^v(\mathcal{V}_2\cup\tilde{v}) = \Delta D^v(\mathcal{V}_2) =  D_{max} - d_v(v_{l_2}, v_{r_2}) $ and $\Delta D^v(\mathcal{V}_2\cup\tilde{v}) - \Delta D^v(\mathcal{V}_2) = 0$. On the contrary, the addition of view $\tilde{v}$ to set $\mathcal{V}_1$ reduces the distortion for view $v$, since view $\tilde{v}$ is closer to view $v$ than $v_{l_1}$, {\em i.e.}, $d_v(v_{l_1}, v_{r_1}) \geq d_v(\tilde{v}, v_{r_1})$. Therefore, $\Delta D^v(\mathcal{V}_1\cup\tilde{v}) - \Delta D^v(\mathcal{V}_1) =  d_v(v_{l_1}, v_{r_1})  - d_v(\tilde{v} , v_{r_1})\geq 0$, and $\Delta D^v(\mathcal{V}_1\cup\tilde{v}) - \Delta D^v(\mathcal{V}_1)  \geq \Delta D^v(\mathcal{V}_2\cup\tilde{v}) - \Delta D^v(\mathcal{V}_2)$.

\item $   v_{l_1}\leq v_{l_2} < \tilde{v} $: In this case, view $\tilde{v}$ becomes the left anchor view closest to $v$. Thus, we have $\Delta D^v(\mathcal{V}_1\cup\tilde{v}) - \Delta D^v(\mathcal{V}_1)  \geq 0$ and $\Delta D^v(\mathcal{V}_2\cup\tilde{v}) - \Delta D^v(\mathcal{V}_2) \geq 0$. However, it is no longer possible to deduce straightforwardly which of the two gains in distortion reduction is larger, and an inspection of all the sub-cases concerning the relative positions of the views $v_{l_1}$, $v_{l_2}$, $\tilde{v}$, $v_{r_1}$, and $v_{r_2}$ with respect to view $v$ is needed. We can distinguish the following ten subcases: 
\end{enumerate}
\begin{subequations}
\allowdisplaybreaks
\begin{align}
v - \tilde{v} < v-v_{l_2} \leq v - v_{l_1} \leq v_{r_2} - v  \leq v_{r_1} - v \label{eq:case1}\\ \allowbreak
 v - \tilde{v} < v-v_{l_2} \leq v_{r_2} - v \leq v - v_{l_1}  \leq v_{r_1} - v \label{eq:case2}  \\ \allowbreak
 v - \tilde{v} \leq v_{r_2} - v \leq v-v_{l_2}  \leq v - v_{l_1}  \leq v_{r_1} - v  \label{eq:case3} \\
 v_{r_2} - v \leq v - \tilde{v} <  v-v_{l_2}  \leq v - v_{l_1}  \leq v_{r_1} - v  \label{eq:case4} \\
 v - \tilde{v} <v-v_{l_2}  \leq v_{r_2} - v \leq v_{r_1} - v \leq    v - v_{l_1}  \label{eq:case5} \\
 v - \tilde{v}  \leq v_{r_2} - v \leq v-v_{l_2}  \leq v_{r_1} - v \leq    v - v_{l_1}\label{eq:case6} \\
 v_{r_2} - v \leq v - \tilde{v} <  v-v_{l_2}  \leq v_{r_1} - v \leq    v - v_{l_1}  \label{eq:case7} \\
 v - \tilde{v} \leq v_{r_2} - v \leq v_{r_1} - v \leq  v-v_{l_2}   \leq    v - v_{l_1}  \label{eq:case8} \\   
 v_{r_2} - v \leq v - \tilde{v} \leq  v_{r_1} - v \leq  v-v_{l_2}   \leq    v - v_{l_1}  \label{eq:case9}\\  
 v_{r_2} - v \leq  v_{r_1} - v  \leq v - \tilde{v} <   v-v_{l_2}   \leq    v - v_{l_1}   \label{eq:case10}
\end{align}
\end{subequations}

\noindent
Here we show analytically that the inequality in \eqref{eq:submodularity} holds for the  case in Eq.~\eqref{eq:case1}, and provide the guidelines for showing its validity for the remaining cases, omitting the details due to limited space. From the distortion model in Eq.~\eqref{eq:dist_model} and the inequalities in \eqref{eq:case1}, we have:

\begin{align}
\allowdisplaybreaks
&\frac{\Delta D^v(\mathcal{V}_2) - \Delta D^v(\mathcal{V}_1) }{\Delta D^v(\mathcal{V}_2\cup \tilde{v}) - \Delta D^v(\mathcal{V}_1\cup \tilde{v})} \nonumber \\  
& = \frac{\mbox{e}^{a_v(v_{r_1} - v_{l_1})}( \mbox{e}^{\beta_v(v -v_{l_1})}-1) - \mbox{e}^{a_v(v_{r_2} - v_{l_2})}( \mbox{e}^{\beta_v(v -v_{l_2})}-1) }{\mbox{e}^{a_v(v_{r_1} - \tilde{v})}( \mbox{e}^{\beta_v(v -\tilde{v})}-1) - \mbox{e}^{a_v(v_{r_2} - \tilde{v})}( \mbox{e}^{\beta_v(v -\tilde{v})}-1)} \nonumber \\ 
 &\overset{v-v_{l_1} > v-v_{l_2}}{\geq}  \frac{(\mbox{e}^{a_v(v_{r_1} - v_{l_1})} - \mbox{e}^{a_v(v_{r_2} - v_{l_2})})( \mbox{e}^{\beta_v(v -v_{l_2})}-1) }{(\mbox{e}^{a_v(v_{r_1} - \tilde{v})}-\mbox{e}^{a_v(v_{r_2} - \tilde{v})})( \mbox{e}^{\beta_v(v -\tilde{v})}-1)} \nonumber\\ 
& \overset{v-v_{l_2} > v-\tilde{v}}{\geq} \frac{\mbox{e}^{a_v(v_{r_1} - v_{l_1})} - \mbox{e}^{a_v(v_{r_2} - v_{l_2})}}{\mbox{e}^{a_v(v_{r_1} - \tilde{v})}-\mbox{e}^{a_v(v_{r_2} - \tilde{v})}} \label{eq:submod_proof} \\ 
 &= \frac{(\mbox{e}^{a_v(v_{r_1} - v_{r_2} + v_{l_2}-v_{l_1})}-1)\mbox{e}^{a_v(v_{r_2-v_{l_1}})}}{(\mbox{e}^{a_v(v_{r_1}-v_{r_2})}-1)  \mbox{e}^{a_v(v_{r_2}-\tilde{v})} } \nonumber \\
&  \overset{v_{r_2} - v_{l_1} \geq v_{r_2}-\tilde{v}}{\geq} \frac{\mbox{e}^{a_v(v_{r_1} - v_{r_2} + v_{l_2}-v_{l_1})}-1}{\mbox{e}^{a_v(v_{r_1}-v_{r_2})}-1} \overset{v_{l_2} \geq v_{l_1}}{\geq} 1. \nonumber
\end{align}

\noindent
Inequality \eqref{eq:submodularity} follows immediately from \eqref{eq:submod_proof}. Using the same procedure, we can prove that \eqref{eq:submodularity} holds for the cases in \eqref{eq:case2}, \eqref{eq:case3} \eqref{eq:case5}, \eqref{eq:case6}, \eqref{eq:case8}. For the rest of the cases, we form the expression 
$$\frac{\Delta D^v(\mathcal{V}_1 \cup \tilde{v}) - \Delta D^v(\mathcal{V}_1) }{\Delta D^v(\mathcal{V}_2\cup \tilde{v}) - \Delta D^v(\mathcal{V}_2\cup \tilde{v})}, $$
and using similar arguments as before, we prove that this expression is greater or equal to 1.

\end{proof}

Intuitively, the above result can be explained by the fact that the quality of the left reference anchor view improves more when adding view $\tilde{v}$ to set $\mathcal{V}_1$ than when adding $\tilde{v}$ to set $\mathcal{V}_2$, since $\tilde{v} - v_{l_1} \geq \tilde{v} - v_{l_2}$. Thus, the gain in the distortion reduction is higher when adding view $\tilde{v}$ to set $\mathcal{V}_1$ compared to adding it to $\mathcal{V}_2$.

\begin{proposition}
The objective function in \eqref{eq:maximization} is a submodular set function over the ground set $\mathcal{E}$. 
\end{proposition}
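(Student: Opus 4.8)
The plan is to reduce submodularity of the objective in \eqref{eq:maximization} to Lemma~\ref{lem:1}. Since that objective is the non-negatively weighted sum $\frac{1}{U}\frac{1}{T}\sum_{u\in\mathcal{U}}\sum_{t\in\mathcal{T}}\sum_{v\in\mathcal{V}}\Delta D_u^{v,t}(\mathcal{S})$, and non-negative combinations of submodular set functions are submodular, it suffices to show that each $\Delta D_u^{v,t}:2^{\mathcal{E}}\to\mathbb{R}$ is submodular. The first step is to make precise the remark following \eqref{eq:min_dist}: $\Delta D_u^{v,t}(\mathcal{S})$ depends on $\mathcal{S}$ only through the set of anchor views effectively delivered to user $u$ in slot $t$,
\[
\mathcal{W}_u^t(\mathcal{S})\triangleq\{v_1,v_{V_p}\}\cup\{v'\in\mathcal{V}_p:\mathcal{S}\cap\mathcal{F}_u^{v',t}\neq\emptyset\}.
\]
Reading \eqref{eq:dist_set_fuc}--\eqref{eq:dist_tilde} and using $d_v(v,v)=0$, one checks that $\Delta D_u^{v,t}(\mathcal{S})=\Delta D^v\big(\mathcal{W}_u^t(\mathcal{S})\big)$ for every $v\in\mathcal{V}_p\cup\mathcal{V}_s$, where $\Delta D^v(\cdot)$ is exactly the set function of Lemma~\ref{lem:1}. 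The map $\mathcal{S}\mapsto\mathcal{W}_u^t(\mathcal{S})$ is monotone, is valued in subsets of $\mathcal{V}_p$ containing $\{v_1,v_{V_p}\}$, and has the key feature that adjoining one ground-set element $e=e_{n,\mathcal{A}_n}^{v',t'}$ either leaves $\mathcal{W}_u^t$ unchanged (when $t'\neq t$, or $n\notin\mathcal{N}_u$, or $u\notin\mathcal{A}_n$, i.e.\ when $e$ belongs to no $\mathcal{F}_u^{v'',t}$) or replaces it by $\mathcal{W}_u^t\cup\{v'\}$ with $v'\in\mathcal{V}_p\setminus\{v_1,v_{V_p}\}$ (when $e\in\mathcal{F}_u^{v',t}$).

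Given this, I would verify the diminishing-returns inequality $\Delta D_u^{v,t}(\mathcal{S}_1\cup\{e\})-\Delta D_u^{v,t}(\mathcal{S}_1)\ge\Delta D_u^{v,t}(\mathcal{S}_2\cup\{e\})-\Delta D_u^{v,t}(\mathcal{S}_2)$ for arbitrary $\mathcal{S}_1\subseteq\mathcal{S}_2\subseteq\mathcal{E}$ and $e\notin\mathcal{S}_2$ by a four-way case split, writing $\mathcal{W}_i\triangleq\mathcal{W}_u^t(\mathcal{S}_i)$ so that $\mathcal{W}_1\subseteq\mathcal{W}_2$. (i) If $e$ lies in no $\mathcal{F}_u^{v'',t}$, both marginals are $0$. (ii) If $e\in\mathcal{F}_u^{v',t}$ and $v'\in\mathcal{W}_1$, both view-sets are unchanged and both marginals are $0$. (iii) If $e\in\mathcal{F}_u^{v',t}$ and $v'\in\mathcal{W}_2\setminus\mathcal{W}_1$, the right-hand marginal is $0$ while the left-hand one equals $\Delta D^v(\mathcal{W}_1\cup\{v'\})-\Delta D^v(\mathcal{W}_1)\ge0$ by the monotonicity of $\Delta D^v$ (which follows from the argument of the preceding proposition). (iv) If $e\in\mathcal{F}_u^{v',t}$ and $v'\notin\mathcal{W}_2$, then both marginals have the form $\Delta D^v(\mathcal{W}_i\cup\{v'\})-\Delta D^v(\mathcal{W}_i)$ and the required inequality is precisely \eqref{eq:submodularity} of Lemma~\ref{lem:1} applied with $\mathcal{V}_1=\mathcal{W}_1$, $\mathcal{V}_2=\mathcal{W}_2$, $\tilde v=v'$; its hypotheses $\mathcal{V}_p\supseteq\mathcal{W}_2\supseteq\mathcal{W}_1\supseteq\{v_1,v_{V_p}\}$ hold by construction, using that the MBS always delivers $v_1$ and $v_{V_p}$. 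Re-assembling the weighted sum over $u,v,t$ finishes the proof. (Conceptually, steps (i)--(iv) are just the statement that composing a monotone submodular function, here $\Delta D^v$, with a union-of-sets map, here $\mathcal{S}\mapsto\mathcal{W}_u^t(\mathcal{S})$, yields a submodular function.)

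The substantive difficulty is entirely absorbed by Lemma~\ref{lem:1}, whose ten-subcase comparison of marginal distortion reductions under the model \eqref{eq:dist_model} is where all the analysis lives; the remainder is bookkeeping. The one point that needs genuine care is the reduction in the first step: one must confirm that the set functions $D_u^{v,t}$ and $\tilde D_u^{v,t}$ of \eqref{eq:dist_set_fuc}--\eqref{eq:dist_tilde} really factor through $\mathcal{W}_u^t(\mathcal{S})$ and, in particular, are insensitive to the user-subset label $\mathcal{A}_n$ of a ground-set element, so that replacing $e_{n,\mathcal{A}_n}^{v',t}$ by $e_{n,\mathcal{A}_n'}^{v',t}$ with $u$ in both, or in neither, changes no marginal. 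A minor secondary subtlety is that the ground set \eqref{eq:ground_set} formally excludes $v_1$ and $v_{V_p}$ while assumption \eqref{eq:assumption} forces their delivery; this is reconciled by building $\{v_1,v_{V_p}\}$ into $\mathcal{W}_u^t(\cdot)$ as a fixed base, which does not affect submodularity since adding a fixed set to all arguments preserves it.
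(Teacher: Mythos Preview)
Your proposal is correct and follows essentially the same route as the paper: reduce to the per-$(u,v,t)$ functions $\Delta D_u^{v,t}$, observe that these depend on $\mathcal{S}$ only through the set of anchor views delivered to user $u$ in slot $t$, and then invoke Lemma~\ref{lem:1}. Your formalization via $\mathcal{W}_u^t(\mathcal{S})$ is in fact a bit more careful than the paper's, which defines the associated view set $\mathcal{V}_j$ somewhat loosely; also note that because Lemma~\ref{lem:1} is stated for \emph{any} $\tilde v\in\mathcal{V}_p$ (not just $\tilde v\notin\mathcal{V}_2$), your cases (ii)--(iv) can be collapsed into a single application of the lemma, which is how the paper handles it.
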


\begin{proof}
Since a non-negative linear combination of monotone submodular functions is also submodular \cite{KrauseBOOK2013}, it is sufficient to show that the distortion reduction $\Delta D^{v,t}_{u}(\mathcal{S}): 2^\mathcal{E} \to \mathbb{R}$ is monotone submodular $\forall u \in \mathcal{U}$, $\forall v \in \mathcal{V}_p \cup \mathcal{V}_s$, $\forall t \in \mathcal{T}$. Let us consider the sets $\mathcal{S}_1 \subseteq \mathcal{S}_2 \subseteq \mathcal{E}$, and an element $e_{n,\mathcal{A}_n}^{\tilde{v},\tilde{t}} \in \mathcal{E}\backslash \mathcal{S}_2$. This element represents the joint placement of segment $B^{\tilde{v}, \tilde{t}}$ in the cache of BS $n$, and its delivery from BS $n$ to the set of users $\mathcal{A}_n$. If $u \notin \mathcal{A}_n$ or $\tilde{t} \neq t$, then adding $e_{n,\mathcal{A}_n}^{\tilde{v},\tilde{t}}$ to sets $\mathcal{S}_1$ and  $\mathcal{S}_2$ does not affect the distortion reduction at user $u$ for segment $B^{v,t}$ since user $u$ does not receive any additional segments with respect to those received according to the joint caching and scheduling policies  defined by sets $\mathcal{S}_1$ and $\mathcal{S}_2$. Thus, $\Delta D_u^{v,t}(\mathcal{S}_1\cup e_{n,\mathcal{A}_n}^{\tilde{v},\tilde{t}}) - \Delta D_u^{v,t}(\mathcal{S}_1) = \Delta D_u^{v,t}(\mathcal{S}_2\cup e_{n,\mathcal{A}_n}^{\tilde{v},\tilde{t}}) - \Delta D_u^{v,t}(\mathcal{S}_2) = 0$. Next, we focus on the case $u \in \mathcal{A}_n$ and $\tilde{t} = t$, {\em i.e.}, user $u$ belongs to the group of users to which segment $B^{\tilde{v}, t}$ is delivered. We associate set $\mathcal{S}_j$, $j = 1,2$, with set $\mathcal{V}_j\subseteq \mathcal{V}_p$, where $v \in \mathcal{V}_j$ iff $e^{v,t}_{n,\mathcal{A}_n} \in \mathcal{S}_j$. From the definition of sets $\mathcal{V}_j$, and since $\mathcal{S}_1 \subseteq \mathcal{S}_2$, it holds that $\mathcal{V}_1 \subseteq \mathcal{V}_2$. Due to the assumption that all the segments of the leftmost and rightmost views are delivered to the users, we also have $\mathcal{V}_j \supseteq \{v_1, v_{V_p}\}$. From Lemma \ref{lem:1} it follows that  $\Delta D^v(\mathcal{V}_1 \cup \tilde{v}) - \Delta D^v(\mathcal{V}_1) \geq \Delta D^v(\mathcal{V}_2 \cup \tilde{v}) - \Delta D^v(\mathcal{V}_2)$; and therefore, $\Delta D^{v,t}_u(\mathcal{S}_1 \cup e^{\tilde{v},t}_{n,\mathcal{A}_n}) - \Delta D^{v,t}_u(\mathcal{S}_1) \geq \Delta D^{v,t}_u(\mathcal{S}_2 \cup  e^{\tilde{v},t}_{n,\mathcal{A}_n}) - \Delta D^{v,t}_u(\mathcal{S}_2)$,  which completes the proof.
\end{proof}

\subsection{Greedy algorithms}
\label{sec:greedy}

In the previous section, we have shown that the objective function in the maximization problem in \eqref{eq:maximization} is a monotone non-decreasing submodular function. We can now show that the optimization problem defined  by Eqs.~\eqref{eq:maximization}-\eqref{eq:counter_constraint} is in the form of a submodular set function maximization problem subject to a separable $d$-dimensional knapsack constraint defined in Appendix \ref{app:submod_fn}. By inspection of the constraints in \eqref{eq:cache_constraint_set_fn}-\eqref{eq:counter_constraint}, it is straightforward to see that they can be partitioned into $d^\prime = 3$ disjoint sets of constrains with $M_1 = N$ cache constraints in the first set, $M_2 = (N+1)T$ rate constraints in the second set, and $M_3 = (N+1)(V_p-2)T$ constraints in the third set that ensure the uniqueness of the selected elements. We can therefore apply the uniform cost (UC) and the weighted cost-benefit (WCB) greedy algorithms described in Appendix \ref{app:greedy} to efficiently solve the maximization problem in \eqref{eq:maximization}-\eqref{eq:counter_constraint}. For the sake of completeness, we summarize the UC and WCB algorithms as applied to the maximization problem in \eqref{eq:maximization}-\eqref{eq:counter_constraint} in Algorithms \ref{algo:uniform_cost} and \ref{algo:weighted_cost}, respectively. According to Theorem \ref{th:approx_ratio} provided in Appendix \ref{app:approx_ratio}, at least one of the two greedy algorithms achieves the approximation ratio of $\frac{1}{2}(1-\mbox{e}^{-1})$. It is worth noting that the WCB greedy algorithm was used in \cite{LiTCSVT2017} to maximize a submodular objective function subject to two knapsack constraints. However, the authors did not provide any theoretical guarantees on its performance. To the best of our knowledge, our work presents the first constant approximation ratio for solving the submodular set function maximization problem subject to a $d$-dimensional knapsack constraint by means of greedy algorithms. 

\begin{algorithm}[t]
\caption{Uniform cost greedy algorithm}
\label{algo:uniform_cost}
\begin{algorithmic}[1]
\STATE \textbf{Input:} $\mathcal{E}$, value query oracle $\Delta D(\mathcal{S})$, $C_n$, $R_n$, cost functions $c_n$, $r_n^t$, $f_n^{v,t}$
\STATE \textbf{Initialization:} $\mathcal{S}_{UC} \leftarrow \emptyset$, $k \leftarrow 0$
\WHILE {$\mathcal{E} \backslash \mathcal{S}_{UC} \neq \emptyset $ }
\STATE $k \leftarrow k+1$
\STATE $e_k \leftarrow \underset{e^{v,t}_{n,\mathcal{A}_n}\in \mathcal{E} \backslash \mathcal{S}_{UC}}{\argmax} \Delta D(\mathcal{S}_{UC}\cup e^{v,t}_{n,\mathcal{A}_n}) - \Delta D(\mathcal{S}_{UC}) $ 
\IF {$c_n(\mathcal{S}_{UC}\cup e_k)\leq C_n$, $r_n^t(\mathcal{S}_{UC}\cup  e_k) \leq R_n$,  $f_n^{v,t}(\mathcal{S}_{UC}\cup  e_k) \leq 1$} 
\STATE $\mathcal{S}_{UC} \leftarrow \mathcal{S}_{UC} \cup e_k$, 
\ELSE 
\STATE $ \mathcal{E} \leftarrow \mathcal{E} \backslash e_k$
\ENDIF
\ENDWHILE
\STATE \textbf{Output:} $\mathcal{S}_{UC} $
\end{algorithmic}
\end{algorithm}

\begin{algorithm}[h]
\caption{Weighted cost-benefit greedy algorithm}
\label{algo:weighted_cost}
\begin{algorithmic}[1]
\STATE \textbf{Input:} $\mathcal{E}$, value query oracle $\Delta D(\mathcal{S})$, $C_n$, $R_n$, cost functions $c_n$, $r_n^t$, $f_n^{v,t}$, weights $\lambda_1, \; \lambda_2, \; \lambda_3$ 
\STATE \textbf{Initialization:} $\mathcal{S}_{WCB} \leftarrow \emptyset$, $k \leftarrow 0$
\WHILE {$\mathcal{E}  \backslash \mathcal{S}_{WCB}\neq \emptyset $ }
\STATE $k \leftarrow k+1$
\STATE
\begin{eqnarray}
e_k \leftarrow \underset{e^{v,t}_{n,\mathcal{A}_n}\in \mathcal{E} \backslash \mathcal{S}_{WCB}}{\argmax} &\lambda_1\frac{\Delta D(\mathcal{S}_{WCB}\cup e^{v,t}_{n,\mathcal{A}_n}) - \Delta D (\mathcal{S}_{WCB})} {\sum_{n^\prime}c_{n^\prime}(e^{v,t}_{n,\mathcal{A}_n})} \nonumber\\
+&\lambda_2\frac{\Delta D (\mathcal{S}_{k-1}\cup e^{v,t}_{n,\mathcal{A}_n}) - \Delta D(\mathcal{S}_{k-1})} {\sum_{n^\prime} \sum_{t^\prime}r_{n^\prime}^{t^\prime}(e^{v,t}_{n,\mathcal{A}_n})}\nonumber\\
+&\lambda_3 \frac{\Delta D (\mathcal{S}_{k-1}\cup e^{v,t}_{n,\mathcal{A}_n}) - \Delta D (\mathcal{S}_{k-1})}{\sum_{n^\prime} \sum_{v^\prime} \sum_{t^\prime}f_{n^\prime}^{v^\prime,t^\prime}(e^{v,t}_{n,\mathcal{A}_n})} \nonumber
 \end{eqnarray}
 \IF {$c_n(\mathcal{S}_{WCB}\cup e_k)\leq C_n$, $r_n^t(\mathcal{S}_{WCB} \cup e_k) \leq R_n$,  $f_n^{v,t}(\mathcal{S}_{WCB} \cup e_k) \leq 1$ } 
\STATE $\mathcal{S}_{WCB} \leftarrow \mathcal{S}_{WCB} \cup e_k$,
\ELSE
\STATE $ \mathcal{E} \leftarrow \mathcal{E} \backslash e_k$
\ENDIF
\ENDWHILE
\STATE \textbf{Output:} $\mathcal{S}_{WCB} $
\end{algorithmic}
\end{algorithm}

\section{Performance evaluations}
\label{sec:evaluation}

For the performance evaluations, we consider a circular cell with the MBS located at its centre. The transmission range of the MBS is set to 400m. A total number of 20 SBSs, each with a coverage radius of 100m, are placed uniformly at random over the cell. The transmission capacity of the SBSs is set to 100Mbps. We consider 200 wireless users uniformly distributed across the macro cell. 

The IMVS system consists of $V_p = 8$ cameras. Each camera generates a video stream encoded at $r= 2$ Mbps and divided into $T = 20$ segments of equal size. We assume that $L = 3$ virtual viewpoints can be synthesized between any two adjacent anchor views. The distortion of the synthesized views is computed based on the model in \eqref{eq:dist_model}.  We assume that the users select the first segment among the captured views uniformly at random. Then, during the streaming session, each user can switch from view $v_i$ to a neighbouring anchor or virtual view $v_j$ with probability $p(v_j|v_i) \propto \frac{1}{\sqrt{2\pi}\sigma}e^{-\frac{(v_j - v_i)^2}{2\sigma^2}}$ for $ |v_j-v_i| \leq W$, and 0 otherwise. For our evaluations, we set $W = 8$ and $\sigma^2  = 5/(L+1)$. From this model, we calculate the popularity distribution $p^{v,t}$ of the video segments. 

We compare the proposed greedy joint caching and scheduling algorithms with a maximum popularity algorithm. The latter fills each SBS's cache with the most popular video segments. It then performs greedy scheduling independently of the cache placement phase. We evaluate both UC greedy and WCB greedy scheduling for the maximum popularity algorithm.

\begin{figure}[t]
	\begin{center}
		\includegraphics[width = 0.45 \textwidth]{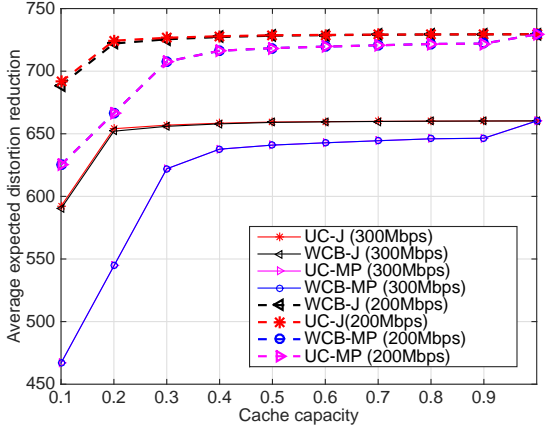} 
	\end{center}
		\caption{Average expected distortion reduction vs the cache capacity at the SBSs, expressed as the percentage of the total size of the multiview video.}
	\label{fig:dist_vs_cache}
		\vspace{-0.4cm}
\end{figure}

Fig.~\ref{fig:dist_vs_cache} shows the average expected distortion reduction versus the cache capacity of the SBSs expressed as a percentage of the total size of the multiview video. UC-J and WCB-J denote the UC and WCB greedy algorithms, respectively, for joint caching and scheduling. UC-MP and WCB-MP denote the maximum popularity caching algorithm with UC and WCB greedy scheduling, respectively. We present results for a total transmission capacity of $200$Mbps and $300$Mbps for the MBS. For the WCB algorithm we have used $\lambda_1 = 0.2$, $\lambda_2 = 0.5$ and $\lambda_3 = 0.3$. The results indicate that the joint caching and scheduling algorithms outperform the maximum popularity counterparts for all values of the cache capacity. For low values of the cache capacity, the improvement in the performance is significant as the maximum popularity algorithm caches the same content in all SBSs; thus the content diversity across the network is limited. Along with the most popular content cached only in few SBSs, the joint caching and scheduling algorithm also caches the less popular content, which, when delivered to the users, improves the reconstruction quality of the views. It is worth noting that this range of capacity values is of great practical interest as SBSs are typically assumed to cache only 5-10\% of the total video catalogue \cite{MaggiARXIV2012,ParisisCOMCOM2017}. The performance of all the algorithms becomes limited by the insufficient transmission capacity of the network. Thus, even though all the SBSs can cache almost all of the contents, they cannot be delivered to the users. 

\begin{figure}[t]
	\begin{center}
		\includegraphics[width = 0.45 \textwidth]{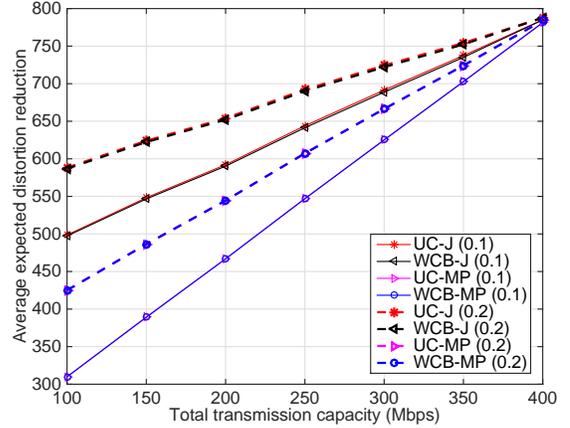} 
	\end{center}
		\caption{Average expected distortion reduction vs the total transmission rate of the MBS.}
	\label{fig:dist_vs_rate}
	\vspace{-0.4cm}
\end{figure}

In Fig.~\ref{fig:dist_vs_rate} we show the average expected distortion reduction versus the total transmission capacity of the MBS for cache capacity equal to 10\% and 20\% of the total size of the video. As the total transmission capacity of the MBS increases, the average expected quality of the multiview video delivered to the users improves. We can see that the joint caching and scheduling algorithm outperforms the maximum popularity algorithm for all values of the MBS transmission capacity. Although the cache capacity of the SBSs is limited, our algorithm performs much better compared to the maximum popularity algorithm due to the more efficient use of the available cache and transmission capacities. As previously, the content diversity is higher when the caching and scheduling policies are optimized jointly. It is worth noting that to achieve the same average expected distortion reduction, the maximum popularity caching algorithm requires a much higher transmission rate to be allocated by the MBS compared to the case of joint cache and scheduling optimization.

Finally, we note that the UC and WCB greedy algorithms in Figs.~\ref{fig:dist_vs_cache} and \ref{fig:dist_vs_rate} perform identically. This is due to the fact that all video segments have the same size. We expect that in the case of multiple multiview videos encoded at different rates, or video segments of unequal duration, the performance of the two algorithms would be different. We leave this investigation for our future work. 

\section{Conclusions and future work}
\label{sec:conclusions}

We have presented a framework for jointly optimizing the caching and scheduling policy for interactive multiview video delivery over a wireless cellular network. Unlike existing works for wireless edge caching, our scheme takes into account the quality of the video delivered to the users and the rate requirements for real-time video delivery. Numerical evaluation of our scheme shows that the joint policy performs significantly better than the independent caching and scheduling policies for the case of multiview video. In our future work, we will investigate ways to simplify the expression for calculating the distortion of the delivered video, with the aim of obtaining a convex problem which can be solved for optimality. A possible approach is to organize the views into embedded sets that progressively improve the quality of the delivered video. This will also relax the constraint of caching the whole segment in the same SBS, and will permit to cache parts of the same video segment encoded with a rateless code in different SBSs. The latter will allow to transform the integer optimization problem into a linear one. 

\appendices
 
\section{Submodular functions}
\label{app:submod_fn}

\subsection{Definitions and properties}
\label{app:def_prop}

Here we recall some basic definitions and results from the theory of submodular functions.

\begin{definition} \cite{KrauseBOOK2013}
A function $g: 2^\mathcal{W} \rightarrow \mathbb{R}$ defined over a ground set $\mathcal{W}$ is \textit{submodular}, if for every $\mathcal{Z}_1 \subseteq \mathcal{Z}_2 \subseteq \mathcal{W}$ and $w \in \mathcal{W}\backslash \mathcal{Z}_2$, 
\begin{equation} \small
g(\mathcal{Z}_1\cup w) - g(\mathcal{Z}_1) \geq g(\mathcal{Z}_2 \cup w) - g(\mathcal{Z}_2).
\label{eq:submod1}
\end{equation}

Alternatively, $g$ is submodular, if for every $\mathcal{Z}_1 , \mathcal{Z}_2 \subseteq \mathcal{W}$, 
\begin{equation} \small
g(\mathcal{Z}_1\cap \mathcal{Z}_2) + g(\mathcal{Z}_1\cup \mathcal{Z}_2) \leq g(\mathcal{Z}_1) + g(\mathcal{Z}_2).
\label{eq:submod2}
\end{equation}
\end{definition}

\begin{definition} \cite{KrauseBOOK2013}
A function $g: 2^\mathcal{W} \rightarrow \mathbb{R}$ defined over a ground set $\mathcal{W}$ is monotone non-decreasing if for every $\mathcal{Z}_1 \subseteq \mathcal{Z}_2 \subseteq \mathcal{W}$, $g(\mathcal{Z}_1) \leq g(\mathcal{Z}_2)$. 
\end{definition}

\begin{proposition} \cite{NemhauserMATHPROG1978} If $g: 2^\mathcal{W} \rightarrow \mathbb{R}$ is a monotone non-decreasing submodular function defined over the ground set $\mathcal{W}$, then
\begin{equation} \small
g(\mathcal{Z}_1) \leq g(\mathcal{Z}_2) + \sum_{w \in \mathcal{Z}_1\backslash \mathcal{Z}_2} g(\mathcal{Z}_2\cup w ) - g(\mathcal{Z}_2)
\end{equation}
for all $\mathcal{Z}_1, \mathcal{Z}_2 \subseteq \mathcal{W} $
\label{eq:prop1}
\label{prop1}
\end{proposition}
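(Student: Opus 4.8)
\emph{Proof proposal.} The plan is to combine monotonicity (to replace $\mathcal{Z}_1$ by $\mathcal{Z}_1\cup\mathcal{Z}_2$) with a telescoping decomposition of $g(\mathcal{Z}_1\cup\mathcal{Z}_2)-g(\mathcal{Z}_2)$ along the elements of $\mathcal{Z}_1\backslash\mathcal{Z}_2$, and then apply the diminishing-returns inequality \eqref{eq:submod1} term by term. First I would observe that, by monotonicity, $g(\mathcal{Z}_1)\leq g(\mathcal{Z}_1\cup\mathcal{Z}_2)$, so it suffices to bound $g(\mathcal{Z}_1\cup\mathcal{Z}_2)$ from above by the right-hand side.

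Enumerate $\mathcal{Z}_1\backslash\mathcal{Z}_2=\{w_1,\dots,w_k\}$ in an arbitrary fixed order and set $\mathcal{Y}_i\triangleq\mathcal{Z}_2\cup\{w_1,\dots,w_i\}$ for $i=0,1,\dots,k$, so that $\mathcal{Y}_0=\mathcal{Z}_2$ and $\mathcal{Y}_k=\mathcal{Z}_1\cup\mathcal{Z}_2$. Then write the telescoping identity
\begin{equation}
g(\mathcal{Z}_1\cup\mathcal{Z}_2)-g(\mathcal{Z}_2)=\sum_{i=1}^{k}\big(g(\mathcal{Y}_i)-g(\mathcal{Y}_{i-1})\big).
\end{equation}
For each $i$, apply \eqref{eq:submod1} with the nested sets $\mathcal{Z}_2\subseteq\mathcal{Y}_{i-1}$ and the element $w_i\notin\mathcal{Y}_{i-1}$ (legitimate since $w_i\notin\mathcal{Z}_2$ and the $w_j$ are distinct), which gives $g(\mathcal{Y}_i)-g(\mathcal{Y}_{i-1})=g(\mathcal{Y}_{i-1}\cup w_i)-g(\mathcal{Y}_{i-1})\leq g(\mathcal{Z}_2\cup w_i)-g(\mathcal{Z}_2)$. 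Summing over $i$ and identifying the resulting sum with $\sum_{w\in\mathcal{Z}_1\backslash\mathcal{Z}_2}\big(g(\mathcal{Z}_2\cup w)-g(\mathcal{Z}_2)\big)$ yields $g(\mathcal{Z}_1\cup\mathcal{Z}_2)\leq g(\mathcal{Z}_2)+\sum_{w\in\mathcal{Z}_1\backslash\mathcal{Z}_2}\big(g(\mathcal{Z}_2\cup w)-g(\mathcal{Z}_2)\big)$, and chaining with $g(\mathcal{Z}_1)\leq g(\mathcal{Z}_1\cup\mathcal{Z}_2)$ completes the argument.

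There is no substantive obstacle here; the only points requiring care are (i) using monotonicity first, so that the elements added along the telescoping chain are exactly those of $\mathcal{Z}_1\backslash\mathcal{Z}_2$ and the bound matches the claimed sum, and (ii) verifying the hypothesis of \eqref{eq:submod1} at every step, namely that the newly added element lies outside the larger of the two nested sets. Monotonicity also makes the statement non-vacuous when $\mathcal{Z}_1\subseteq\mathcal{Z}_2$, in which case the sum is empty and the inequality reduces to $g(\mathcal{Z}_1)\leq g(\mathcal{Z}_2)$.
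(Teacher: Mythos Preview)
Your proof is correct and is in fact the standard argument for this classical inequality. Note, however, that the paper does not supply its own proof of this proposition; it simply cites it as a known result from Nemhauser, Wolsey, and Fisher~\cite{NemhauserMATHPROG1978}, so there is no proof in the paper to compare against.
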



\subsection{Submodular function maximization with a $d$-dimensional knapsack constraint}
Let $g: 2^\mathcal{W} \rightarrow \mathbb{R}$ be a monotone non-decreasing submodular function defined over the ground set $\mathcal{W}$. The submodular function maximization problem subject to a $d$-dimensional knapsack constraint is formulated as 
\begin{equation} \small
\begin{split}
\max_{\mathcal{Z}\in\mathcal{W}} & \;g(\mathcal{Z}) \\
\mbox{s.t. } \quad h_i(\mathcal{Z}) \leq  & H_i, \quad i = 1, 2, \dots, d,
\end{split}
\label{eq:submodmax}
\end{equation}
\noindent where $d$ is the number of knapsack dimensions, $h_i(\mathcal{Z}) = \sum_{z\in\mathcal{Z}}h_i(z)$ is the cost function of the $i$th dimension, and $h_i(z) > 0$. Each dimension of the knapsack can be viewed as a resource, with $H_i$ being the total budget of the $i$th resource.

We now consider a special case of the maximization problem in \eqref{eq:submodmax}. Let us assume that the $d$ knapsack constraints in \eqref{eq:submodmax} can be partitioned into $d^\prime$ disjoint subsets such that, for every subset  $i = 1, 2, \dots, d^\prime$ of the constraints, the ground set $\mathcal{W}$ can be partitioned into $M_i$ disjoint subsets $\mathcal{W}_i^1, \mathcal{W}_i^2 \dots, \mathcal{W}_i^{M_i}$, where $M_i$ is the number of constraints in the $i$th subset and $\sum_{i=1}^{d^\prime}M_i = d$. Let  $h_i^m(\mathcal{Z}) = \sum_{z\in\mathcal{Z}}h_i^m(z)$ be the cost function for the $m$th constraint in the $i$th subset of constraints with $m = 1,2,\dots, M_i$. We further assume that $ h_i^m(z) > 0$ if $z \in \mathcal{W}_i^m$, and $0$ if $z \notin \mathcal{W}_i^m$, only a subset of elements in $ \mathcal{W}$ consume a non-zero amount of the $m$th resource in the $i$th set of resources, and there are $M_i$ such disjoint subsets of elements associated with the $i$th set of resources for $i = 1, 2, \dots, d^\prime$.  Formally, this problem can be written as 
\begin{equation} \small
\begin{split}
\max_{\mathcal{Z}\in\mathcal{W}} & \;g(\mathcal{Z}) \\
\mbox{s.t. } \quad h_i^m(\mathcal{Z}) \leq   H_i^m, \quad & i = 1, 2, \dots, d^\prime,  \quad m = 1, 2,\dots M_i,
\end{split}
\label{eq:submodmax_disjoint}
\end{equation}
where  $h_i^m(\mathcal{Z}) = \sum_{z\in\mathcal{Z}}h_i^m(z) = \sum_{z\in\mathcal{Z}\cap \mathcal{W}_i^m}h_i^m(z)$, and $H_i^m$ is the total budget for the $m$th resource in the $i$th set of constraints. We refer to this problem as the submodular function maximization with a separable $d$-dimensional knapsack constraint.


\section{Greedy algorithms for submodular function maximization}
\label{app:greedy}

\subsection{The uniform cost (UC) greedy algorithm}
\label{sec:UC}
The UC greedy algorithm \cite{LeskovecSIGKDD2007} for the submodular function maximization problem with $d$-dimensional knapsack constraint stated in \eqref{eq:submodmax}  is described in Algorithm \ref{algo:uniform_cost_appendix}. The algorithm takes as input the ground set $\mathcal{W}$, a value query oracle $g(\mathcal{Z})$ that returns the value of the objective function in \eqref{eq:submodmax} for some subset $\mathcal{Z}$ of the ground set, the cost functions  $h_i(\mathcal{Z})$ and the values of the total budgets $H_i$, $i =1,2,\dots, d$.  The algorithm starts with an empty solution set, and at the $k$-th iteration picks the element from the ground set that maximizes the gain with respect to the solution set computed at step $k-1$ (step 5 of Algorithm \ref{algo:uniform_cost_appendix}).  If this choice satisfies the $d$-dimensional knapsack constraint specified in \eqref{eq:submodmax}, the element is added to the solution set. Otherwise, the solution set is not updated and the element is removed from the ground set. This procedure is repeated until all elements from the ground set have been either included in the solution set or removed from the ground set.  When Algorithm \ref{algo:uniform_cost_appendix} is applied to the separable $d$-dimensional knapsack constraint problem in \eqref{eq:submodmax_disjoint}, the condition in line 6 must be replaced with the condition given in the parentheses. 

\begin{algorithm}[t]
\caption{UC greedy algorithm}
\label{algo:uniform_cost_appendix}
\begin{algorithmic}[1]
\STATE \textbf{Input:} $\mathcal{W}$, value query oracle $g(\mathcal{Z})$, cost functions $h_i(\mathcal{Z})$, total budget values $H_i$
\STATE \textbf{Initialization:} $\mathcal{Z}_{UC} \leftarrow \emptyset$, $jk\leftarrow 0$
\WHILE {$\mathcal{W} \backslash \mathcal{Z}_{UC} \neq \emptyset $ }
\STATE $k \leftarrow k+1$
\STATE $w_k \leftarrow \underset{w\in \mathcal{W}\backslash \mathcal{Z}_{UC}}{\argmax} g(\mathcal{Z}_{UC}\cup w) - g(\mathcal{Z}_{UC}) $ 
\IF {$h_i(\mathcal{Z}_{UC} \cup w)\leq H_i$,  $\forall i $ $\Big( h_i^m(\mathcal{Z}_{UC} \cup w)\leq H_i^m \Big)$ } 
\STATE $\mathcal{Z}_{UC} \leftarrow \mathcal{Z}_{UC} \cup w$
\ELSE 
\STATE  $ \mathcal{W}\leftarrow \mathcal{W} \backslash w_k$
\ENDIF
\ENDWHILE
\STATE \textbf{Output:} $\mathcal{Z}_{UC} $
\end{algorithmic}
\end{algorithm}

\subsection{The WCB greedy algorithm}
\label{sec:WCB}
The UC greedy algorithm presented above can perform arbitrarily poorly as it does not take into account the cost of the element selected greedily at each iteration \cite{KrauseBOOK2013}. This shortcoming is addressed in the cost-benefit greedy algorithm for the sumbodular function maximization problem with a single knapsack constraint \cite{SviridenkoOPRESLET2004}. In the cost-benefit greedy algorithm, the next element to be included in the solution set is the element that maximizes the gain-to-cost ratio. However, in a $d$-dimensional knapsack constraint problem, each element is associated with a $d$-dimensional cost vector. To account for the $d$ different costs, the authors in \cite{LiTCSVT2017} have introduced the WCB algorithm. The WCB algorithm is summarized in Algorithm \ref{algo:weighted_cost_appendix}. It works similarly to the UC greedy algorithm, but instead of selecting the element that maximizes the gain resulting from adding this element to the solution set, it maximizes a weighted sum of the gain-to-cost ratio per each dimension of the $d$-dimensional knapsack constraint.  The weights $\lambda_i$ satisfy $\sum_{i=1}^d \lambda_i =1$ and can be chosen arbitrarily to reflect the significance of the dimensions.  When applied to solving the separable $d$-dimensional knapsack constraint problem stated in \eqref{eq:submodmax_disjoint}, the assignment in line 5 and the condition in line 6 of Algorithm \ref{algo:weighted_cost_appendix} must be replaced by the corresponding assignment and condition given in the parentheses. 

\begin{algorithm}[t]
\caption{WCB greedy algorithm }
\label{algo:weighted_cost_appendix}
\begin{algorithmic}[1]
\STATE \textbf{Input:} $\mathcal{W}$, value query oracle $g(\mathcal{Z})$, cost functions $h_i(\mathcal{Z})$,  total budget values $H_i$, weights $\lambda_i$ 
\STATE \textbf{Initialization:}  $\mathcal{Z}_{WCB} \leftarrow \emptyset$, $k \leftarrow 0$
\WHILE {$\mathcal{W} \backslash \mathcal{Z}_{WCB}\neq \emptyset $ }
\STATE $k \leftarrow k+1$
\STATE 
$w_k \leftarrow \underset{w\in \mathcal{W} \backslash \mathcal{Z}_{WCB}}{\argmax} \sum_{i=1}^d \lambda_i\frac{g(\mathcal{Z}_{WCB}\cup w) - g (\mathcal{Z}_{WCB})} {h_i(w)} $ \\
$\Big(w_k \leftarrow \underset{w\in \mathcal{W} \backslash \mathcal{Z}_{WCB}}{\argmax} \sum_{i=1}^{d^\prime} \lambda_i\frac{g(\mathcal{Z}_{WCB}\cup w) - g (\mathcal{Z}_{WCB})} {\sum_{m=1}^{M_i}h_i^m(w)} \Big)$
\IF {$h_i(\mathcal{Z}_{WCB} \cup w)\leq H_i$, $\forall i$ \\  $\Big(h_i^m(\mathcal{Z}_{WCB} \cup w)\leq H_i^m$, $\forall i$, $\forall m\Big)$} 
\STATE $\mathcal{Z}_{WCB} \leftarrow \mathcal{Z}_{WCB} \cup w$
\ELSE 
\STATE  $ \mathcal{W} \leftarrow \mathcal{W} \backslash w$
\ENDIF
\ENDWHILE
\STATE \textbf{Output:} $\mathcal{Z}_{WCB} $
\end{algorithmic}
\end{algorithm}


\section{Approximation ratio}
\label{app:approx_ratio}

Here we show that at least one of the two greedy algorithms, namely the UC and the WCB algorithms,  does not perform too badly. More rigorously, let $\mathcal{Z}_{OPT}$ denote the solution of the optimization problem defined in \eqref{eq:submodmax}, {\em i.e.}, 
\begin{equation} \small
\mathcal{Z}_{OPT} \triangleq \argmax_{\mathcal{Z}\in\mathcal{W}} g(\mathcal{Z}) \mbox{ and } h_i(\mathcal{Z}_{OPT}) \leq H_i, \quad \forall i.
\end{equation}
Let $\mathcal{Z}_{UC}$ and $\mathcal{Z}_{WCB}$ be the solutions returned by the UC and WCB greedy algorithms, respectively, applied to problem \eqref{eq:submodmax}. The following theorem states that the worst case performance guarantee of the two greedy algorithms for solving the optimization problem in \eqref{eq:submodmax} is $\frac{1}{2}(1-e^{-1})$. This theorem generalizes the result in \cite{KrauseTECHREP2005} to the case of $k$-dimensional knapsack constraint. 
\begin{theorem}
Let  $\mathcal{Z}^*  \triangleq \argmax_{\mathcal{Z}\in\{  \mathcal{Z}_{UC}, \mathcal{Z}_{WCB}\}} g(\mathcal{Z})$ when applying the UC and WCB algorithms to the optimization problem in \eqref{eq:submodmax}. Then,  $g(\mathcal{Z}^*) > \frac{1}{2}(1-e^{-1}) g(\mathcal{Z}_{OPT})$.
\label{th:approx_ratio}
\end{theorem}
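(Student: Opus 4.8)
The plan is to follow the template of the classical Khuller--Moss--Naor / Sviridenko cost-benefit analysis for a single knapsack, in the refined form due to Krause and Guestrin, and to lift it to the separable $d$-dimensional setting; the outcome is that between the two greedy runs, one captures the ``bulky'' part of $\mathcal{Z}_{OPT}$ and the other captures a single heavy element. First I would fix a run of the WCB algorithm and follow the sequence $w_1,w_2,\dots$ of elements selected as the $\argmax$ in line~5 (regardless of whether they are ultimately inserted), writing $\mathcal{Z}^{(j)}=\{w_1,\dots,w_j\}$ for the partial solution formed by the first $j$ that were inserted, and letting $w_{\ell+1}$ be the first element that is rejected, i.e.\ the first one whose insertion would violate some constraint, say $h_{i^*}^{m^*}(\mathcal{Z}^{(\ell)}\cup w_{\ell+1})>H_{i^*}^{m^*}$. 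Writing $\delta_w(\mathcal{Z}):=g(\mathcal{Z}\cup w)-g(\mathcal{Z})$ for marginal gains, I would then apply Proposition~\ref{prop1} with $\mathcal{Z}_1=\mathcal{Z}_{OPT}$ and $\mathcal{Z}_2=\mathcal{Z}^{(j-1)}$ to obtain, for every $j\le \ell+1$,
\begin{equation}
g(\mathcal{Z}_{OPT}) \le g(\mathcal{Z}^{(j-1)}) + \sum_{w\in\mathcal{Z}_{OPT}\setminus\mathcal{Z}^{(j-1)}}\delta_w(\mathcal{Z}^{(j-1)}).
\end{equation}

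The crux is the second step: converting the WCB selection rule into a recursion of the form $g(\mathcal{Z}_{OPT})-g(\mathcal{Z}^{(j-1)})\le \tfrac{1}{\rho_j}\bigl(g(\mathcal{Z}^{(j)})-g(\mathcal{Z}^{(j-1)})\bigr)$ for suitable normalized costs $\rho_j$ of the picked elements. The rule gives $\sum_{i=1}^{d^\prime}\lambda_i\,\delta_{w_j}(\mathcal{Z}^{(j-1)})/c_i(w_j)\ \ge\ \sum_{i=1}^{d^\prime}\lambda_i\,\delta_{w}(\mathcal{Z}^{(j-1)})/c_i(w)$ for every still-available $w$, where $c_i(w)$ is the total consumption of resource class $i$ by $w$; combining this with the weighted arithmetic--harmonic mean inequality and with the feasibility of $\mathcal{Z}_{OPT}$ in \emph{every} one of the $d$ constraints bounds $\sum_{w\in\mathcal{Z}_{OPT}}\delta_w(\mathcal{Z}^{(j-1)})$ by $\delta_{w_j}(\mathcal{Z}^{(j-1)})$ times a ratio of costs and budgets, and the separable structure is exactly what keeps this ratio from inflating with $d$. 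Rearranging into $g(\mathcal{Z}_{OPT})-g(\mathcal{Z}^{(j)})\le(1-\rho_j)\bigl(g(\mathcal{Z}_{OPT})-g(\mathcal{Z}^{(j-1)})\bigr)$, iterating over $j=1,\dots,\ell+1$, using that the $\rho_j$ sum to strictly more than $1$ because $w_{\ell+1}$ overflows constraint $(i^*,m^*)$, and applying $\prod_j(1-\rho_j)<e^{-\sum_j\rho_j}<e^{-1}$, yields $g(\mathcal{Z}^{(\ell)}\cup w_{\ell+1}) > (1-e^{-1})\,g(\mathcal{Z}_{OPT})$.

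Finally I would split this heavy set: by submodularity and $g(\emptyset)\ge 0$ we have $g(\mathcal{Z}^{(\ell)}\cup w_{\ell+1})\le g(\mathcal{Z}^{(\ell)})+g(\{w_{\ell+1}\})$, hence $\max\{g(\mathcal{Z}^{(\ell)}),g(\{w_{\ell+1}\})\} > \tfrac12(1-e^{-1})\,g(\mathcal{Z}_{OPT})$. Since $\mathcal{Z}^{(\ell)}\subseteq\mathcal{Z}_{WCB}$ and $g$ is monotone non-decreasing, $g(\mathcal{Z}_{WCB})\ge g(\mathcal{Z}^{(\ell)})$; and since the first element chosen by the UC algorithm maximizes the singleton value $g(\{w\})$ and $g$ is monotone, $g(\mathcal{Z}_{UC})\ge g(\{w_{\ell+1}\})$ (a feasibility remark is needed here: $w_{\ell+1}$ alone must satisfy every constraint, otherwise no feasible solution contains it and we may discard it). Therefore $g(\mathcal{Z}^*)=\max\{g(\mathcal{Z}_{UC}),g(\mathcal{Z}_{WCB})\} > \tfrac12(1-e^{-1})\,g(\mathcal{Z}_{OPT})$, which is the claim.

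The main obstacle is the recursion step. In the single-knapsack proof the selection rule immediately gives $\delta_w\le (c(w)/c(w_j))\,\delta_{w_j}$, which summed over $\mathcal{Z}_{OPT}$ produces exactly $\rho_j=c(w_j)/H$ and hence $\sum_j\rho_j>1$ from a single overflowing constraint. With the weighted multi-dimensional criterion the analogous inequality couples all $d^\prime$ ratios at once, so one must (i) track the $\rho_j$ honestly through the $\lambda_i$'s and the aggregated budgets, (ii) verify that pushing the bound simultaneously through all $d$ constraints --- rather than one at a time --- does not degrade the constant to something like $\tfrac{1}{d}(1-e^{-1})$, and (iii) confirm that $\sum_{j\le \ell+1}\rho_j>1$ genuinely follows from the overflow at step $\ell+1$. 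Getting this bookkeeping right, using the separability of the knapsack to decouple the resource classes, is where the real work lies; the submodular-greedy machinery around it is standard.
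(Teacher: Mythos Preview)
Your plan matches the paper's proof essentially step for step: apply Proposition~\ref{prop1}, use the WCB selection rule together with a short chain of harmonic-type inequalities (the paper uses $1/\sum_j a_j<\sum_j 1/a_j$ and $\sum_{w\in\mathcal{Z}_{OPT}}h_j(w)\le H_j$) to obtain the recursion with $\rho_j=\sum_i h_i(w_j)/H_i$, telescope via $1-x\le e^{-x}$ and the overflow at the first violated constraint to get $g(\mathcal{Z}^{(\ell)}\cup w_{\ell+1})>(1-e^{-1})g(\mathcal{Z}_{OPT})$, then split by subadditivity and bound the singleton term by $g(\mathcal{Z}_{UC})$. One remark: Theorem~\ref{th:approx_ratio} is stated for the \emph{general} problem~\eqref{eq:submodmax}, and the paper's inequality chain already delivers the constant $\tfrac{1}{2}(1-e^{-1})$ with no separability hypothesis---separability only enters in Corollary~\ref{cor:disjoint_approx_ratio}---so your worry about the ratio degrading to $\tfrac{1}{d}(1-e^{-1})$ without it is unnecessary.
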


\begin{proof}
Let $\mathcal{Z}_i =\{w_1, w_2, \dots, w_i\}$ be the value of $\mathcal{Z}_{WCB}$  at the $k_i$th iteration of Algorithm \ref{algo:weighted_cost_appendix}, where $ i\leq k_i$.  Then,
\begin{align}\small \allowdisplaybreaks
&g(\mathcal{Z}_{OPT}) 
 \overset{(a)}{\leq} g(\mathcal{Z}_{i-1}) +\sum_{w\in \mathcal{Z}_{OPT} \backslash \mathcal{Z}_{i-1}} g(\mathcal{Z}_{i-1}\cup w) -g(\mathcal{Z}_{i-1} ) \nonumber \\ 
 &= g(\mathcal{Z}_{i-1}) + \nonumber \\ 
 & \sum_{w\in \mathcal{Z}_{OPT} \backslash \mathcal{Z}_{i-1}} \Big(\sum_{j=1}^d \frac{\lambda_j}{h_j(w)}\Big)\frac{g(\mathcal{Z}_{i-1}\cup w) -g(\mathcal{Z}_{i-1} )}{\sum_{j=1}^d \frac{\lambda_j}{h_j(w)}} \nonumber \\ 
 &\overset{(b)}{\leq} g(\mathcal{Z}_{i-1}) + \sum_{j=1}^d \frac{\lambda_j}{h_j(w_i)}\Big( g(\mathcal{Z}_{i}) -g(\mathcal{Z}_{i-1} ) \Big) \nonumber \\ 
 & \quad \quad\quad\quad \quad\quad \quad\quad \quad \quad\sum_{w\in \mathcal{Z}_{OPT} \backslash \mathcal{Z}_{i-1}} \frac{1}{\sum_{j=1}^d \frac{\lambda_j}{h_j(w)}} \label{eq:inequality1}\\
  &\overset{(c)}{< }g(\mathcal{Z}_{i-1}) + \sum_{j=1}^d \frac{\lambda_j}{h_j(w_i)}\Big( g(\mathcal{Z}_{i}) -g(\mathcal{Z}_{i-1} ) \Big) \nonumber \\
  & \quad \quad\quad \quad\quad \quad\quad \quad \quad\quad\sum_{w\in \mathcal{Z}_{OPT} \backslash \mathcal{Z}_{i-1}} \sum_{j=1}^d\frac{h_j(w)}{\lambda_j} \nonumber \\
 &\overset{(d)}{\leq}g(\mathcal{Z}_{i-1})  + \sum_{j=1}^d \frac{\lambda_j}{h_j(w_i)} \sum_{j=1}^d\frac{H_j}{\lambda_j} \Big( g(\mathcal{Z}_{i}) -g(\mathcal{Z}_{i-1} ) \Big) \nonumber
\end{align}

In the above series of inequalities, inequality (a) is due to Proposition \ref{prop1}, inequality (b) results from the greediness of Algorithm \ref{algo:weighted_cost_appendix} and inequality (c) uses the inequality 
\begin{equation} \small
\frac{1}{x_1+x_2} < \frac{1}{x_1}+\frac{1}{x_2}, \quad  \mbox{for} \quad  x_1,x_2 >0
\label{eq:inequality2}
\end{equation}
 Finally, inequality (d) results from the fact that $\sum_{w\in \mathcal{Z}_{OPT} \backslash \mathcal{Z}_{i-1}} h_j(w) \allowbreak \leq H_j$.

Subtracting  $\sum_{j=1}^d \frac{\lambda_j}{h_j(w_i)} \sum_{j=1}^d\frac{H_j}{\lambda_j} g(\mathcal{Z}_{OPT})$ from both sides of the inequality \eqref{eq:inequality1} and rearranging the terms, we obtain the following recursive inequality
\begin{equation} \small
\begin{split}
g(&\mathcal{Z}_{i})  - g(\mathcal{Z}_{OPT}) \\
& > \Big(   1 - \frac{1}{ \sum_{j=1}^d \frac{\lambda_j}{h_j(w_i)}  \sum_{j=1}^d\frac{H_j}{\lambda_j} } \Big) \Big(  g(\mathcal{Z}_{i-1}) - g(\mathcal{Z}_{OPT})   \Big)\\
&  \overset{e}{>}  \Big(1 - \frac{1} {\sum_{j=1}^d\frac{H_j}{h_j(w_i)}}  \Big) \Big(  g(\mathcal{Z}_{i-1}) - g(\mathcal{Z}_{OPT})   \Big) \\
&  \overset{f}{>} \Big(1 -\sum_{j=1}^d\frac{h_j(w_i)}{H_j} \Big)\Big(  g(\mathcal{Z}_{i-1}) - g(\mathcal{Z}_{OPT})   \Big)
\end{split}
\label{eq:inequality3}
\end{equation}
To obtain inequality (e) we have used the following inequality
\begin{equation} \small
\frac{1}{\sum_{i=1}^m a_i \sum_{i=1}^m b_i} < \frac{1}{\sum_{i=1}^m a_ib_i}, \quad \mbox{for} \quad  a_i,b_i > 0
\label{eq:inequality4}
\end{equation}
 
\noindent while (f) uses inequality \eqref{eq:inequality2}. After solving inequality \eqref{eq:inequality3} recursively, we obtain 
\begin{equation} \small
\begin{split}
g(\mathcal{Z}_i) &> \Big(1-\prod_{k=1}^i \Big(1-\sum_{j=1}^d\frac{h_j(w_k)}{H_j}\Big) \Big)g(\mathcal{Z}_{OPT}) \\
 &\overset{(g)}{\geq} \Big(1- \prod_{k=1}^i \exp \Big(-\sum_{j=1}^d\frac{h_j(w_k)}{H_j}\ \Big)\Big)g(\mathcal{Z}_{OPT}) \\
& =  \Big(1- \exp \Big(-\sum_{k=1}^{i}\sum_{j=1}^d\frac{h_j(w_k)}{H_j}\Big)\Big)g(\mathcal{Z}_{OPT}) \\
 &\overset{(h)}{=}\Big(1- \exp \Big(-\sum_{j=1}^d\frac{h_j(\mathcal{Z}_i)}{H_j}\Big)\Big)g(\mathcal{Z}_{OPT})
\end{split}
\label{eq:inequality5}
\end{equation}
where (g) is due to the inequality 
\begin{equation} \small
1-x \leq \exp(-x), \quad \mbox{for} \quad x > 0
\label{eq:inequality6}
\end{equation}
and (h) results from the fact that $\sum_{k=1}^i h_j(w_k) = h_j(\mathcal{Z}_i)$

Let $k_{i^*}$ be the last step at which line 6 of Algorihtm \ref{algo:weighted_cost_appendix} evaluates to True. Then $\mathcal{Z}_{WCB} = \mathcal{Z}_{k_{i^*}} = \{w_1, w_2, \dots, w_{k_{i^*}}\}$ is the solution returned by the WCB algorithm. Now let $w_{k_{i^*}+1}$ be the element evaluated in line 5 of Algorithm \ref{algo:weighted_cost_appendix} at step $k_{i^*}+1$. By assumption, $w_{k_{i^*}+1}$ violates at least one of the $d$ constraints in line 6. Let $j^*$ be the index of the constraint that is violated. Since when adding the element $w_i$ to the set $\mathcal{Z}_i$ in the above analysis, we did not assume that this element satisfies the constraints in line 6 of Algorithm  \ref{algo:weighted_cost_appendix}, inequality \eqref{eq:inequality5} holds for the set $ \mathcal{Z}_{k_{i^*}} \cup w_{k_{i^*}+1}$, {\em i.e.}
\begin{equation} \small
\begin{split}
g(&\mathcal{Z}_{k_{i^*}}  \cup w_{k_{i^*}+1}) \\
&= g(\mathcal{Z}_{WCB} \cup w_{k_{i^*}+1}) \\
 &> \Big(1- \exp \Big(-\sum_{j=1}^d\frac{h_j(\mathcal{Z}_{WCB} \cup w_{k_{i^*}+1})}{H_j}\Big)\Big)g(\mathcal{Z}_{OPT})\\
&\overset{(i)} {>} (1-\mbox{e}^{-1}) g(\mathcal{Z}_{OPT})
\end{split}
\label{eq:inequality7}
\end{equation}
To obtain inequality (i)  in \eqref{eq:inequality7} we used the fact that, by assumption, $h_{j^*}(\mathcal{Z}_{WCB} \cup w_{k_{i^*}+1}) > H_{j^*}$.
Therefore, the following inequality also holds

\begin{equation} \small
\begin{split}
\sum_{j=1}^d & \frac{h_j(\mathcal{Z}_{WCB} \cup w_{k_{i^*}+1})}{H_j} \\
&= \frac{h_{j^*}(\mathcal{Z}_{WCB} \cup w_{k_{i^*}+1})}{H_{j^*}} 
+ \sum_{j\neq j^*}\frac{h_j(\mathcal{Z}_{WCB} \cup w_{k_{i^*}+1})}{H_j}  \\
& > 1
\end{split}
\label{eq:inequality8}
\end{equation}

Finally, to obtain the approximation ratio of Theorem \ref{th:approx_ratio}, let $w^* \triangleq \argmax _{\{w\in\mathcal{W}: h_j(w) < H_j, \forall \;  j\}}g(w)$. Then, by definition, $g(\mathcal{Z}_{UC}) \geq g(w^*) \geq g(w_{k_{i^*}+1})$. Using the definition of $\mathcal{Z}^*$, we have 
\begin{equation} \small
\begin{split}
g(\mathcal{Z}^*) &\geq \frac{g(\mathcal{Z}_{UC})+ g(\mathcal{Z}_{WCB})}{2}  \\
& \geq \frac{1}{2}\Big( g(w_{k_{i^*}+1}) +g(\mathcal{Z}_{WCB}) \Big) \\
&  \overset{(j)}{\geq} \frac{1}{2} g(\mathcal{Z}_{WCB} \cup w_{k_{i^*}+1}) \\
&  > \frac{1}{2}(1-\mbox{e}^{-1}) g(\mathcal{Z}_{OPT})
\end{split}
\label{eq:inequality9}
\end{equation}
where inequality (j) is due to the subadditivity property of submodular functions.
\end{proof}

\begin{corollary}
Let  $\mathcal{Z}^*  \triangleq \argmax_{\mathcal{Z}\in\{  \mathcal{Z}_{UC}, \mathcal{Z}_{WCB}\}} g(\mathcal{Z})$ when applying the UC and WCB algorithms to the optimization problem in \eqref{eq:submodmax_disjoint}. Then,  $g(\mathcal{Z}^*) > \frac{1}{2}(1-e^{-1}) g(\mathcal{Z}_{OPT})$.
\label{cor:disjoint_approx_ratio}
\end{corollary}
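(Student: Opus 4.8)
The plan is to deduce Corollary \ref{cor:disjoint_approx_ratio} from Theorem \ref{th:approx_ratio} by exhibiting the separable problem \eqref{eq:submodmax_disjoint} as a particular $d$-dimensional knapsack instance of \eqref{eq:submodmax}. Concretely, I would set $d = \sum_{i=1}^{d^\prime} M_i$, index the resources by the pairs $(i,m)$, and take the $h_i^m$ as cost functions and the $H_i^m$ as budgets. The distinguishing feature of the resulting instance is that $h_i^m$ is supported on the block $\mathcal{W}_i^m$, so — since $\mathcal{W}_i^1,\dots,\mathcal{W}_i^{M_i}$ partition $\mathcal{W}$ — every element $w$ consumes a strictly positive amount of exactly $d^\prime$ of the $d$ resources, namely $(i,m_i(w))$ for the unique index $m_i(w)$ with $w\in\mathcal{W}_i^{m_i(w)}$.

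Next I would check that the separable UC and WCB algorithms of Appendix \ref{app:greedy} coincide with the generic UC and WCB algorithms applied to this instance, under the natural convention that a term of the cost--benefit score whose denominator vanishes — i.e. a resource the candidate element does not consume — is simply dropped. For UC this is immediate, since the selection rule ignores costs and the feasibility test is identical constraint by constraint. For WCB, the key identity is $\sum_{m=1}^{M_i}h_i^m(w) = h_i^{m_i(w)}(w)$, which shows that assigning the per-resource weights $\lambda_{(i,m)} := \lambda_i$ (still strictly positive) turns the generic score $\sum_{(i,m):\,h_i^m(w)>0}\lambda_{(i,m)}\frac{\Delta g}{h_i^m(w)}$ into the separable score $\sum_{i=1}^{d^\prime}\lambda_i\frac{\Delta g}{\sum_m h_i^m(w)}$. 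I would also remark that the proof of Theorem \ref{th:approx_ratio} uses only $\lambda_j>0$ and never the normalization $\sum_j\lambda_j=1$, so this reweighting is harmless.

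The remaining task is to verify that the chain \eqref{eq:inequality1}--\eqref{eq:inequality9} in the proof of Theorem \ref{th:approx_ratio} survives the presence of the vanishing cost terms. Wherever a cost appears in a numerator (as in the passage from inequality (b) to inequality (c) of that proof), the zero terms drop harmlessly out of the sum; wherever a cost appears in a denominator, one restricts the sum over resources to the support of the element in question, which is nonempty ($d^\prime$ resources), so no division by zero is ever performed, and inequalities \eqref{eq:inequality2} and \eqref{eq:inequality4} still apply once both factors being combined are cut down to that common support. The decisive step is again the termination argument: the element $w_{k_{i^*}+1}$ that halts the greedy run violates one concrete constraint $(j^*,m^*)$, so $h_{j^*}^{m^*}(\mathcal{Z}_{WCB}\cup w_{k_{i^*}+1}) > H_{j^*}^{m^*}$ and hence $\sum_{(i,m)}h_i^m(\mathcal{Z}_{WCB}\cup w_{k_{i^*}+1})/H_i^m > 1$, which is exactly what drives \eqref{eq:inequality7}; combining with the UC bound via subadditivity as in \eqref{eq:inequality9} yields the factor $\frac{1}{2}(1-\mbox{e}^{-1})$.

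I expect the main obstacle to be bookkeeping rather than a genuinely new idea: one must take care that restricting each cost--benefit sum to its support does not disturb the way inequalities \eqref{eq:inequality2} and \eqref{eq:inequality4} are invoked — the index sets of the two factors being multiplied must coincide, which they do once both are cut down to the support of the current element. One should also dispose of the minor modelling wrinkle that an MBS element consumes zero cache and therefore lies in no block of the cache group; this is handled by adjoining a dummy cache constraint for the MBS with budget $+\infty$, which never binds and contributes $0$ to every ratio.
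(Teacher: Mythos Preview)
Your approach is workable but differs from the paper's in a meaningful way. Rather than embedding the separable problem into a $d$-dimensional instance with many vanishing costs and then re-auditing the chain \eqref{eq:inequality1}--\eqref{eq:inequality9} around those zeros, the paper re-runs the proof of Theorem~\ref{th:approx_ratio} at the level of the $d^\prime$ \emph{groups}: it replaces $h_j(\cdot)$ by the aggregated cost $\sum_{m=1}^{M_j} h_j^m(\cdot)$ and $H_j$ by the aggregated budget $\sum_{m=1}^{M_j} H_j^m$ throughout \eqref{eq:inequality1}, so every cost entering a denominator is strictly positive and the passage down to \eqref{eq:inequality3} carries over verbatim with $d^\prime$ in place of $d$. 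The single new ingredient is the elementary inequality $\frac{\sum_m a_m}{\sum_m b_m} < \sum_m \frac{a_m}{b_m}$ for $a_m,b_m>0$, applied after the analogue of \eqref{eq:inequality3} to pass from the aggregated ratio $\frac{\sum_m h_j^m(w_i)}{\sum_m H_j^m}$ to the per-constraint sum $\sum_m \frac{h_j^m(w_i)}{H_j^m}$; this is what aligns the recursion with the termination test, since the violated constraint is a single $(j^*,m^*)$ and not an aggregate.

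Your route carries the full $d$-index set and manages supports explicitly. The identity $\sum_m h_i^m(w)=h_i^{m_i(w)}(w)$ together with the choice $\lambda_{(i,m)}=\lambda_i$ does collapse every support-restricted sum back to a $d^\prime$-term sum indexed by $i$, so once the bookkeeping is done you land on the same recursive inequality as the paper. One small correction to your sketch: in step~(d) the factor $\sum_{w}\sum_{j} h_j(w)/\lambda_j$ cannot literally be ``cut down to the support of the current element $w_i$'', because different $w\in\mathcal{Z}_{OPT}$ have different supports; what actually happens is that, since $\lambda_{(i,m)}=\lambda_i$ is constant in $m$, swapping sums and bounding by the budgets yields $\sum_{i=1}^{d^\prime}\big(\sum_m H_i^m\big)/\lambda_i$, which is already a $d^\prime$-term sum, and then \eqref{eq:inequality4} applies with matching index sets. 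The paper's aggregation trick buys a cleaner exposition with no zero-denominator convention to justify; your direct approach makes the separable structure more visible but needs the extra step you omit (the paper's inequality $\frac{\sum a_m}{\sum b_m}<\sum\frac{a_m}{b_m}$, or an equivalent) to reconcile the aggregated recursion with the per-constraint termination criterion. The dummy-cache wrinkle you raise concerns the specific application rather than the corollary itself, which already assumes each group partitions $\mathcal{W}$.
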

\begin{proof} The proof of Corollary \ref{cor:disjoint_approx_ratio} is similar to the proof of Theorem \ref{th:approx_ratio}. We therefore omit most of the details and highlight only the differences in the two proofs.

Replacing $h_j(w_i)$  and $H_j$ with $\sum_{m=1}^{M_j} h_j^m(w_i)$ and $\sum_{m=1}^{M_j} H_j^m$, respectively, and $d$ with $d^\prime$ in inequality \eqref{eq:inequality1} and following the same procedure that led to \eqref{eq:inequality3}, we obtain
\begin{equation} \small
\begin{split}
g(\mathcal{Z}_{i}) & - g(\mathcal{Z}_{OPT})  \\
&> \Big(1 -\sum_{j=1}^{d^\prime}\frac{\sum_{m=1}^{M_j} h_j^m(w_i)}{\sum_{m=1}^{M_j} H_j^m} \Big)  \Big(  g(\mathcal{Z}_{i-1}) - g(\mathcal{Z}_{OPT})   \Big) \\
& >  \Big(1 -\sum_{j=1}^{d^\prime}\sum_{m=1}^{M_j}\frac{ h_j^m(w_i)}{ H_j^m} \Big)  \Big(  g(\mathcal{Z}_{i-1}) - g(\mathcal{Z}_{OPT}) \Big)
\end{split}
\label{eq:inequality10}
\end{equation}
where the second inequality in \eqref{eq:inequality10} is due to the following inequality
\begin{equation}
\frac{\sum_{i=1}^ma_i}{\sum_{i=1}^mb_i} < \sum_{i=1}^m \frac{a_i}{b_i}, \quad \mbox{for} \quad a_i, b_i > 0
\label{eq:inequality11}
\end{equation}
The result follows immediately by applying the same arguments as those that we used to prove the inequalities \eqref{eq:inequality5}, \eqref{eq:inequality7} and \eqref{eq:inequality9}.
\end{proof}
\bibliographystyle{IEEEtran}


\end{document}